\theoremstyle{definition}
\definecolor{codegreen}{rgb}{0,0.6,0}
\definecolor{codegray}{rgb}{0.5,0.5,0.5}
\definecolor{codepurple}{rgb}{0.58,0,0.82}
\lstdefinestyle{nanos_style}{
    basicstyle=\ttfamily\scriptsize,
    numberstyle=\tiny\color{codegray},
    breakatwhitespace=false,
    breaklines=true,
    captionpos=b,
    keepspaces=true,
    numbers=left,
    numbersep=2pt,
    showspaces=false,
    showstringspaces=false,
    showtabs=false,
    tabsize=2,
	language=C,
	float
}
\begin{document}

%% Title information
\title[Advanced Synchronization Techniques for Task-based Runtime Systems]{Advanced Synchronization Techniques for Task-based Runtime Systems}         %% [Short Title] is optional;
                                        %% when present, will be used in
                                        %% header instead of Full Title.
% \titlenote{The Nanos6 case study}             %% \titlenote is optional;
                                        %% can be repeated if necessary;
                                        %% contents suppressed with 'anonymous'
% \subtitle{The Nanos6 case study}                     %% \subtitle is optional
% \subtitlenote{with subtitle note}       %% \subtitlenote is optional;
%                                         %% can be repeated if necessary;
%                                         %% contents suppressed with 'anonymous'

%% Author information
%% Contents and number of authors suppressed with 'anonymous'.
%% Each author should be introduced by \author, followed by
%% \authornote (optional), \orcid (optional), \affiliation, and
%% \email.
%% An author may have multiple affiliations and/or emails; repeat the
%% appropriate command.
%% Many elements are not rendered, but should be provided for metadata
%% extraction tools.

%% Author with single affiliation.
\author{David Álvarez}
\orcid{0000-0002-4607-1627}             %% \orcid is optional
\affiliation{
%   \position{Junior Research Engineer}
%   \department{Computer Sciences}              %% \department is recommended
  \institution{Barcelona Supercomputing Center}            %% \institution is required
  % \streetaddress{Street1 Address1}
  \city{Barcelona}
  % \state{State1}
  % \postcode{Post-Code1}
  \country{Spain}                    %% \country is recommended
}
\email{david.alvarez@bsc.es}          %% \email is recommended

%% Author with two affiliations and emails.
\author{Kevin Sala}
% \authornote{with author2 note}          %% \authornote is optional;
%                                         %% can be repeated if necessary
\orcid{0000-0001-8233-1185}             %% \orcid is optional
\affiliation{
%   \position{PhD Student}
%   \department{Computer Sciences}              %% \department is recommended
  \institution{Barcelona Supercomputing Center}            %% \institution is required
  % \streetaddress{Street1 Address1}
  \city{Barcelona}
  % \state{State1}
  % \postcode{Post-Code1}
  \country{Spain}                    %% \country is recommended
}
\email{kevin.sala@bsc.es}         %% \email is recommended

\author{Marcos Maroñas}
% \authornote{with author2 note}          %% \authornote is optional;
%                                         %% can be repeated if necessary
\orcid{0000-0003-1035-5625}             %% \orcid is optional
\affiliation{
%   \position{PhD Student}
%   \department{Computer Sciences}              %% \department is recommended
  \institution{Barcelona Supercomputing Center}            %% \institution is required
  % \streetaddress{Street1 Address1}
  \city{Barcelona}
  % \state{State1}
  % \postcode{Post-Code1}
  \country{Spain}                    %% \country is recommended
}
\email{marcos.maronasbravo@bsc.es}         %% \email is recommended

\author{Aleix Roca}
% \authornote{with author2 note}          %% \authornote is optional;
%                                         %% can be repeated if necessary
\orcid{0000-0002-6715-3605}             %% \orcid is optional
\affiliation{
%   \position{PhD Student}
%   \department{Computer Sciences}              %% \department is recommended
  \institution{Barcelona Supercomputing Center}            %% \institution is required
  % \streetaddress{Street1 Address1}
  \city{Barcelona}
  % \state{State1}
  % \postcode{Post-Code1}
  \country{Spain}                    %% \country is recommended
}
\email{arocanon@bsc.es}         %% \email is recommended

\author{Vicenç Beltran}
% \authornote{with author2 note}          %% \authornote is optional;
%                                         %% can be repeated if necessary
\orcid{0000-0002-3580-9630}             %% \orcid is optional
\affiliation{
%   \position{Senior Researcher}
%   \department{Computer Sciences}              %% \department is recommended
  \institution{Barcelona Supercomputing Center}            %% \institution is required
  % \streetaddress{Street1 Address1}
  \city{Barcelona}
  % \state{State1}
  % \postcode{Post-Code1}
  \country{Spain}                    %% \country is recommended
}
\email{vbeltran@bsc.es}         %% \email is recommended

%% Abstract
%% Note: \begin{abstract}...\end{abstract} environment must come
%% before \maketitle command
\begin{abstract}
	Task-based programming models like OmpSs-2 and OpenMP provide a flexible data-flow execution model to exploit dynamic, irregular and nested parallelism.
	Providing an efficient implementation that scales well with small granularity tasks remains a challenge, and bottlenecks can manifest in several runtime components.
	In this paper, we analyze the limiting factors in the scalability of a task-based runtime system and propose individual solutions for each of the challenges, including a wait-free dependency system and a novel scalable scheduler design based on delegation.
	We evaluate how the optimizations impact the overall performance of the runtime, both individually and in combination.
	We also compare the resulting runtime against state of the art OpenMP implementations, showing equivalent or better performance, especially for fine-grained tasks.
\end{abstract}

%% 2012 ACM Computing Classification System (CSS) concepts
%% Generate at 'http://dl.acm.org/ccs/ccs.cfm'.
\begin{CCSXML}
	<ccs2012>
	<concept>
	<concept_id>10010147.10010169.10010175</concept_id>
	<concept_desc>Computing methodologies~Parallel programming languages</concept_desc>
	<concept_significance>500</concept_significance>
	</concept>
	</ccs2012>
\end{CCSXML}

\ccsdesc[500]{Computing methodologies~Parallel programming languages}
	%% End of generated code

%% Keywords
%% comma separated list
\keywords{parallel programming models, OmpSs-2, Open\-MP, task-based runtimes, data dependencies, lock-free, wait-free}  %% \keywords are mandatory in final camera-ready submission

%% \maketitle
%% Note: \maketitle command must come after title commands, author
%% commands, abstract environment, Computing Classification System
%% environment and commands, and keywords command.
\maketitle

\section{Introduction}
\label{sec:introduction}
Due to diminishing returns on modern CPUs' single-thread performance, the industry has shifted towards many-core and heterogeneous architectures \cite{DarkSiliconMulticore,TheisMoore,BorkarFuture}.
The recent focus on energy efficiency has increased the interest in systems with numerous processing elements with lower frequencies.
Those parallel systems can achieve huge performance figures, but their limiting factor is the scalability of the software. %XXX cite needed

One of the most widely used standards for programming shared-memory systems in both industry and academy is OpenMP\cite{openmpindustry}.
OpenMP initially had only a \textit{fork-join} execution model, where programmers explicitly define parallel regions.
The fork-join model is an efficient way to exploit well-structured parallelism, but it is not well suited to exploit irregular, dynamic, or nested parallelism.
In recent years, task-based parallelism has been introduced in OpenMP to overcome these limitations.

The task-based paradigm can exploit more fine-grained, dynamic, and irregular parallelism than the fork-join model.
Additionally, it minimizes the need for global synchronization points, and it naturally copes with load-imbalance.
These features make the paradigm especially promising to exploit modern many-core architectures \cite{taskBased1, taskBased2}.

Moreover, the introduction of task data dependencies was the critical element to truly move forward to a \textit{data-flow} execution model that relies on fine-grained synchronizations between tasks.
This model reduces further the need for global synchronization points and allows the runtime to exploit data-locality between tasks.
However, task management inside the runtime might incur some non-negligible overhead, especially for fine-grained tasks on large many-core systems.

This paper presents optimized designs for the main components of a task-based runtime system.
We also present a lightweight and integrated instrumentation system, which we used to analyze in detail how the scalability of each component affects the global scalability of the runtime.

A task-based runtime system has three main components: the dependency system, the task scheduler, and the memory allocator, which tightly interact with each other.
The first stage of a task's life cycle is its creation, which involves the memory allocator.
The runtime then checks its data dependencies to determine if the task is ready or blocked based on the previous tasks' dependencies.
Once all its dependencies are satisfied, the task becomes ready and is added to the scheduler, which will eventually schedule it on an available core.
Once the task has executed, it releases its dependencies so that its successor tasks may become ready.
Note that the three components require a synchronization mechanism as they have to deal with multiple requests simultaneously.
%The main goal is to minimize the overhead in task allocation, initialization, dependency management, and scheduling, and use the least amount of synchronization possible.
%Moreover, those overheads drastically grow as the task granularity decreases.
Thus, the application developer has to strike a balance in task granularity: it has to be small enough to provide sufficient work for all available cores while being coarse enough to evade runtime system overheads \cite{OpenMPGranularities}.
However, as applications scale out to more cores (or nodes) and the problem size remains constant, task granularities naturally decrease.
At some point in the scaling process, tasks can become so small that the application is overhead-bound, and the scalability depends on the ability of the parallel runtime to handle small tasks.

Our contributions are to (1) present a novel wait-free data structure and algorithm to support complex dependency models in a task-based runtime; (2) provide a scalable task scheduler that works well under high contention and uses delegation instead of work-stealing; and (3) analyze and create a detailed performance profile of the task-based runtime with a lightweight instrumentation framework.
\section{Data dependency system}
\label{sec:dependencysystem}

The data dependency system is one of the limiting factors in the scalability of task-based runtimes, especially when running programs with very fine-grained tasks.
Such programs register large amounts of small tasks with dependencies that take a short time to execute.
Thus the overhead in the dependency system can significantly impact the overall application performance.

In this paper, we apply our optimizations to the Nanos6 runtime \cite{bsc2019nanos6} for the OmpSs-2 programming model.
Compared to OpenMP, the model for data dependencies in OmpSs-2 is more complex \cite{JMWeaks, reductions-ferran, bsc2020ompss2}.
The main complexity increase is because the dependency domains of tasks on different nesting levels can share dependencies,
 which complicates the locking scheme used to implement the model.
Reductions also are treated as data dependencies on OmpSs-2 tasks, unlike OpenMP where they are defined at a task group level.
% However, this capability is not exclusive of the OmpSs-2 model and is present in Legion and DARMA.

\subsection{Dependencies in Nanos6}

\begin{figure}
	\centering
	\includegraphics[width=.8\columnwidth]{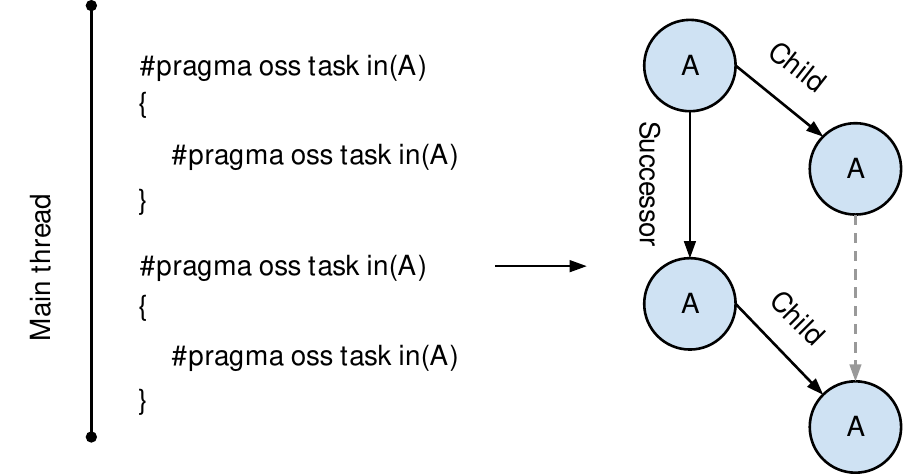}
	\Description{The graph of task dependency accesses on an OmpSs-2 program}
	\caption{The graph of task dependency accesses on an OmpSs-2 program. The program (left) results in the dependency graph (right) between the accesses to location A.}
	\label{fig:dependencygraph}
\end{figure}

In Nanos6, a task is a \textit{sibling} of another when they are at the same nesting level.
Tasks declared inside another one (nested) are considered \textit{child} tasks.

The dependencies of a task in Nanos6 are represented as accesses, which are composed of a memory address and an access type, e.g., read or write.
Two accesses have a {\it successor} relation when they share the same memory address and their tasks have a {\it sibling} link.
Similarly, two accesses have a {\it child} relation if they share the same memory address and their tasks have a {\it child} link.
Task access relations on OmpSs-2 programs form binary trees between the linked tasks, as shown in Figure \ref{fig:dependencygraph}.
Note that on OpenMP users cannot express dependencies crossing nesting levels, and the \textit{child} relationship is not considered to determine the dependencies between tasks.

During an OmpSs-2 program, several tasks can be created and finished concurrently.
They have to propagate dependency information through the data structures to determine if the added tasks are ready and if the recently finished tasks allow any successor to become ready.
\subsection{Wait-free data dependencies}

The previous implementation of dependencies inside Nanos6 was based on fine-grained locking, but it was very complex to avoid possible deadlocks.
Instead of protecting the data structures that hold the information about the dependencies through mutual exclusion, our alternative is to adapt some wait-free programming concepts to create a data structure capable of supporting the concurrency we need.
Otherwise, if we had decided to build a data structure based on mutual exclusion, we would have to compromise either with the complexity of fine-grained locking or the performance degradation of coarse-grained locking.
The main goal is not to have wait-freedom as a requirement but to provide fast and scalable dependency registration.
% Unlike in some other cases where lock-free approaches are slower than locking, we prove that our wait-free implementation outperforms lock-based dependency systems.

In this section, we describe the concept behind the dependency implementation we propose for the Nanos6 runtime. %??
In the following section, we will formalize the approach and prove its wait-freedom property. %??

When a program creates a task, all the dependencies are registered inside the Nanos6 runtime using the \textit{DataAccess} structure, shown in Listing \ref{lst:dataAccess}.
The \textit{Task} structure stores all task-related information, including a pointer to the array of its accesses.
% I think we are missing the explanation of the address field
% Access != dependency. This comment is related to the one I made previously
Each access has an atomic \textit{flags} field that stores its current state, indicating if the dependency is currently satisfied (not preventing the task execution) and whether the satisfiability information has propagated to its successor and child accesses.
The access also stores a pointer to its \textit{successor}, which is the next access (belonging to a successor task) to the same address in the current nesting level, and a pointer to the \textit{child}, which points to the first access to the same address that belongs to a child task.

The flags field represents a Finite State Machine in which the state diagram has no cycles, so there are starting and final states.
Since we only modify this data structure with atomic operations, we have named it \textit{Atomic State Machine} or ASM.
Note that there is one instance of this state machine for each access.

\begin{lstlisting}[language=C, label=lst:dataAccess, caption=Relevant fields in the Task and DataAccess structures]
	struct Task {
		...
		DataAccess *dataAccesses;
	};

	struct DataAccess {
		void *address;
		std::atomic<access_flags_t> flags;
		DataAccessType type;
		DataAccess *successor, *child;
	};
\end{lstlisting}

The only way for an ASM to transition from one state to another is through receiving a data access message.
The structure of a message, shown in Listing \ref{lst:dataAccessMessage}, contains two flags fields.
One field contains the flags to set in the target access.
The other has flags that have to be set on the message's originator as a delivery notification.
The ASM's transitions have to be done as a single atomic operation, optimally through a \textit{fetch\&or}.
Based on the values before and after the transition, the ASM may generate additional messages to deliver to its child or successor accesses.
All the messages that are still pending to deliver are stored in a simple per-thread queue called MailBox.
We illustrate this process in Figure \ref{fig:mailbox}.

\begin{lstlisting}[language=C, label=lst:dataAccessMessage, caption=DataAccessMessage structure]
	struct DataAccessMessage {
		access_flags_t flagsForNext, flagsAfterPropagation;
		DataAccess *from, *to;
	};
\end{lstlisting}

Figure \ref{fig:mailbox} represents the basic structure of the algorithm.
While the \textit{MailBox} has undelivered messages, we pop one from the container and \textit{deliver} it to the destination access.
Upon receiving the message, the access atomically updates its \textit{flags} field.
The atomic update provides us with the flags' exact values before and after the message was received.
As flags cannot be unset, we know each transition happens only once in the access lifetime.
With this information, we decide if it is needed to generate more messages (to propagate information about satisfied accesses, for example).
Finally, we atomically update the \textit{flags} field of the originator access of the message to notify the delivery of the information.
We use this last atomic update to determine we can safely delete an access.

\begin{figure}
	\centering
	\includegraphics[width=1\columnwidth]{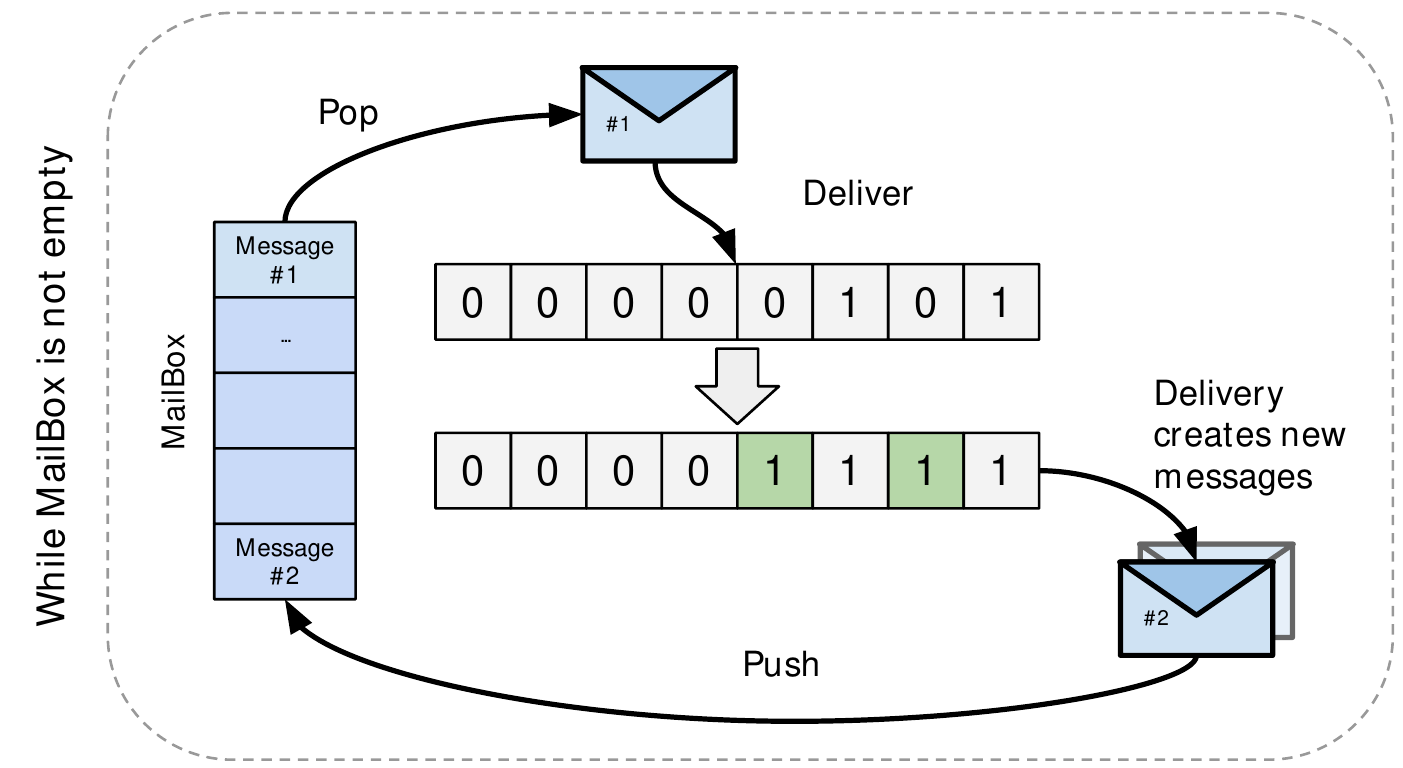}
	\Description{Atomic state machine dependency propagation}
	\caption{Atomic state machine dependency propagation}
	\label{fig:mailbox}
\end{figure}

\subsection{Formalization}

In this section, we introduce several relevant definitions and finally prove wait-freedom.

\theoremstyle{definition}
\begin{definition}{\textit{Access Flags.}}
We can define the set of all possible flags an access can have as set $F$.
Then, the set $F_a$ of flags that an access $a$ has is defined as $F_a \subseteq F$.
When $a$ is created, flags are initialized as $F_a = \emptyset$.
\end{definition}

\theoremstyle{definition}
\begin{definition}{\textit{Delivery.}}
	The only operation that can be done on the flags $F_a$ of an access $a$ is the delivery of a message $M \subseteq F$.
	A delivery operation is defined as:
		\[F_{a, i+1} = M \cup F_{a, i}\]
	Assuming the message M follows the two restrictions:
		\[ M \cap F_{a, i} = \emptyset \]
		\[ M \neq \emptyset \]
\end{definition}

The restrictions on the content of the messages are part of what provides the wait-freedom assurance.
Informally, bits in the flag field can only be set, and the field size is limited.
Additionally, each message that an access receives has to contain at least one flag, and none of the flags can be already set in the access.
By those properties, we can deduce that each access can receive only a limited number of messages, which in the worst case is $|F|$.

\begin{lemma}
	\label{lem:waitfreedelivery}
	The delivery of a message $M$ to an access $a$ is non-blocking and wait-free.
\end{lemma}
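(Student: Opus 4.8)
The plan is to reduce a single delivery to one atomic read-modify-write instruction and then invoke the progress guarantees of that primitive. By definition, a delivery replaces $F_{a,i}$ with $M \cup F_{a,i}$, which is exactly the effect of a bitwise-or of the flag word $M$ into the flag word of access $a$. Hardware provides this as the fetch\&or instruction, which atomically ORs $M$ into the target location and returns the previous value $F_{a,i}$. Thus a single delivery is realized by exactly one fetch\&or, with no surrounding loop.

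First I would establish the non-blocking property. Since the delivery acquires no lock and holds no resource on behalf of the issuing thread beyond the duration of the atomic instruction itself, no suspended thread can stall another mid-operation. Hence the system as a whole always makes progress, which is precisely the non-blocking (lock-free) guarantee.

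The heart of the argument is wait-freedom, which is strictly stronger: every individual delivery must complete in a bounded number of its own steps, independent of scheduling or the number of contending threads. Here I would emphasize the contrast with compare-and-swap based schemes. A CAS updates a location only if it still holds an expected value, so under contention a thread may observe a stale value, fail, and retry; in the worst case it is starved indefinitely, yielding lock-freedom but not wait-freedom. The fetch\&or used for delivery has no such failure mode: it unconditionally applies the OR and always succeeds on its first and only attempt. Therefore each delivery terminates after one atomic instruction—a constant, hence bounded, number of steps—regardless of how many other deliveries target $a$ concurrently. Because wait-freedom implies lock-freedom, this also re-establishes the non-blocking claim.

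The main obstacle is conceptual rather than computational: it is making precise why the read-modify-write must be a single, retry-free atomic instruction. If one instead realized the OR by a load, a local OR, and a CAS-guarded store, concurrent deliveries could force retries and the wait-free bound would be lost. I would therefore tie the step bound explicitly to the atomicity of fetch\&or, and observe that the message restrictions $M \cap F_{a,i} = \emptyset$ and $M \neq \emptyset$—while not required for progress—ensure that the returned before-state and the resulting after-state uniquely identify which flags this particular delivery set. Consequently concurrent deliveries linearize cleanly, and no thread ever needs to re-examine or redo its own update.
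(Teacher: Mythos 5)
Your proof is valid only under an assumption the paper does not make: that a delivery can be realized as a single, unconditionally succeeding atomic \textit{fetch\&or} that also returns the previous flag word. On several architectures (x86 in particular) an atomic OR that returns the old value---which the algorithm genuinely needs, since the before/after pair determines which further messages to generate---is compiled down to a load--modify--CAS retry loop, which is exactly the case you set aside. More importantly, you assert that such a CAS-based realization ``could force retries and the wait-free bound would be lost,'' and that the message restrictions are ``not required for progress.'' Both claims are the opposite of what the paper argues, and the second is wrong in this setting: the restrictions $M \neq \emptyset$ and $M \cap F_{a,i} = \emptyset$ are precisely what rescue wait-freedom under CAS.

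The paper's argument is the bounded-retry one you dismiss. A CAS on the flag word of $a$ can fail only because some other delivery to $a$ succeeded in the meantime; every successful delivery strictly enlarges $F_a$ within the finite set $F$; hence the flag word can change at most $|F|$ times over the entire lifetime of the access, and any single delivery completes after at most $|F|$ CAS attempts. That finite, contention-independent bound is the content of the lemma. Reducing everything to a hypothetical retry-free primitive proves wait-freedom only of an idealized implementation, and in doing so discards the one idea that makes the actual (CAS-based) implementation wait-free rather than merely lock-free.
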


\begin{proof}
	To establish wait-freedom, we need to prove that there is a bound on the time a delivery operation can take regardless of any other threads.
	Assuming that the operation is performed using a \textit{CAS} primitive, we can assume constant time for the \textit{CAS}, but it can fail in case of conflict with another thread.
	Hence, to prove wait-freedom, we need to bound the number of failures due to conflict a delivery operation can suffer.

	A \textit{CAS} can fail if and only if another thread modified the memory location during the delivery operation.
	However, in our system, the only way to change the flags $F_a$ of an access $a$ is to deliver another message.
	As we established our restriction that a message cannot be empty, the maximum number of messages an access $a$ can receive if each message only contained one flag would be:
		\[M_{max} = \{ \{x\} \mid x \in F \}\]

	Then, the maximum number of \textit{CAS} operations that have to be done until one succeeds, assuming a worst case scenario, is $|M_{max}|$, which trivially $|M_{max}| = |F|$ .% \footnote{In the current version of the Nanos6 runtime this cardinality is 22, which could be further bounded because some flags are always set in combination with others.}
	We can establish that the maximum number of tries to deliver a message $M$ is $T_m \leq |F|$.
	It is possible to get a closer bound on the retries if we consider the number of flags in the message to be delivered, but this is sufficient for us to prove Lemma \ref{lem:waitfreedelivery}.
	The time needed to deliver a message is clearly bounded to a constant number of \textit{CAS} operations.
\end{proof}

% A regular reader would think: what is the unregister operation?
\theoremstyle{definition}
\begin{definition}{\textit{Unregister.}}
For a \textit{Task} $t$ that has a set of accesses $A_t$, the unregister operation on a Task is defined as delivering a specific message $M$ to each access $a$ so that $a \in A_t$.
\end{definition}

% The unregister of a task dependencies?
A task is unregistered once it finishes its execution, and the message delivered to the access indicates this condition.
An unregister operation will thus do $|A_t|$ delivery operations.
As we have proved that a delivery is wait-free, the unregister operation will be wait-free because it does a finite and known number of delivery operations.

\section{Task Scheduling System}
The scheduling system orchestrates the execution of ready tasks on worker threads.
Throughout this section, we assume that exactly one worker thread is bound to each CPU for simplicity but without loss of generality.
When a task becomes ready, it is forwarded to the scheduling system. Then, when a core becomes idle, it calls the scheduler to ask for more work.
If there are ready tasks, the scheduling system will determine the best task that can be executed on this specific core.
It is worth noting that multiple ready tasks can be added to the scheduler concurrently and that several worker threads can simultaneously call the scheduler.
Thus, it becomes mandatory to add some kind of synchronization on the scheduler to prevent data-races.

Most task-based runtime systems rely on multiple ready task queues combined with work-stealing to mitigate the above-mentioned problems.
However, on the typical application design pattern in which a single thread creates all tasks, work-stealing behaves similarly to the global lock approach because most threads need to steal work from a single creator queue.
In contrast, the approach described in this section adapts the global lock concept to handle both single creator and multiple creator cases efficiently.

Usi a global lock is the most straightforward approach to synchronize the scheduler.
In this case, the lock is acquired to add ready tasks to the scheduler and to schedule tasks to worker threads.
When task granularity is coarse enough, this approach works well and keeps the scheduling system's design simple and the scheduling policies accurate.

However, when task granularity is fine-grained and the system has many cores, the core that is creating new tasks might not be fast enough to feed all other cores.
In this scenario, many worker threads will busy-wait on the global lock.
This has two adverse effects.
Firstly, it increases contention on the cache subsystem due to the additional cache coherence traffic.
Secondly, it prevents ready tasks from entering the scheduler fast enough because the task creator has to fight with all of the worker threads to get the global lock.

A well-known technique to mitigate lock con\-ten\-tion on the scheduler is to let the worker threads spin for a while, and if they do not get any ready task, block its thread using a mutex until a ready task becomes available.
We avoid using this approach because it adds extra work to the thread that is creating tasks. When ready tasks are added to the scheduler, it has to check if there are blocked threads and wake them up with an expensive system call.

\subsection{Optimizing task insertion}
\label{sec:spsc}
To avoid the stagnation of ready tasks in the scheduler, we have decoupled the actions of adding and scheduling ready tasks. When a task becomes ready, we do not directly add it to the scheduler but a bounded wait-free single-consumer single-producer (SPSC) queue working as a buffer.

The number of SPSC queues can be configured from a single one to one per core.
In the first case, we would need a lock to synchronize all task additions, while in the latter, no locking is needed at all.
We use the lock to synchronize between producers, but the synchronization between producer and consumers remains wait-free.
In our experiments, we use one SPSC queue and lock per NUMA node.

When a worker threads enters the scheduler, it first drains all SPSC queues and inserts the ready tasks into the global ready queue.
With this approach, we ensure that any contention generated by many worker threads calling the scheduler does not affect the performance of cores that are creating tasks.
We can implement this optimization because the actual addition of tasks can be safely delayed until a core becomes idle and calls the scheduler.
Notice that this delegation technique is compatible with any lock implementation.

\subsection{Scalable lock designs}
The scheduler system has to be extensible, and adding new scheduling policies should be easy.
We have discarded a wait-free or lock-free scheduler because of its complexity and difficulty to maintain, as each scheduling policy would require a new ad-hoc design and implementation.

Our scheduling system relies on a global lock to protect its internal data structures, making it easy to develop new scheduling policies.
Ticket Locks \cite{10.1145/359060.359076} are fair and provide strict FIFO ordering, but they have contention problems under high-load conditions, so they are not suitable for our centralized scheduler.
Partitioned Ticket Locks \cite{10.1145/1989493.1989543} (PTLocks) extend Ticket Locks with a padded array used to do busy-waiting by idle threads.
If the size of this array is equal to the number of CPUs, then each core will busy-wait in a different array slot, reducing the cache coherence traffic to the minimum.
We use PTLocks as a building block of our optimized lock design presented in the next section.

\lstinputlisting[language=C++, label=lst:PTL, caption=Implementation of a PartitionedTicketLock, firstline=6, lastline=26]{code/PartitionedTicketLock.hpp}

Listing \ref{lst:PTL} shows the implementation of a Partitioned Ticket Lock. For the sake of clarity, we have omitted the padding of the fields to prevent false sharing, and the memory order constraints of all atomic operations. The {\it \_waitq} (line 6) is an array of unsigned 64-bit integers used to implement a circular buffer representing an infinite virtual waiting queue. The {\it \_head} and {\it \_tail} fields (lines 4 and 5) are used to index the {\it \_waitq} array. The {\it \_head} represents the index of the latest slot in the virtual waiting queue and the {\it \_tail} is the index of the next slot that will be able to acquire the lock. When the lock is free and no thread is waiting to acquire it, {\it \_tail == \_head+1}.

We initialize the lock such that {\it \_waitq[\_head\%Size] == \_head}, guaranteeing that the first thread that arrives will be able to acquire it.
The {\it lock()} operation (line 14) consists of just two calls. The first one is {\it \_getTicket()} (line 8), which performs an atomic fetch and increment of the {\it \_head} field to obtain the last ticket (line 9). The second call is {\it \_waitTurn} (line 11) that receives the ticket as a parameter. The current thread busy-waits on the {\it \_waitq[ticket \% Size]} position until it matches (or exceeds) the ticket value. The {\it unlock} operation (line 17) calculates the next slot index that will be able to acquire the lock.  Then it increments {\it \_tail} and writes {\it \_tail-1} in the computed slot to release the lock.

PTLocks perform as well as more complex designs such as MCS~\cite{10.1145/103727.103729} or Ticket Locks Augmented with a Waiting Array (TWA)~\cite{TWA}, however it requires more memory space.
%This lock's performance is equivalent to more complex implementations such as MCS \cite{10.1145/103727.103729} or Ticket Locks Augmented with a Waiting Array (TWA) \cite{TWA}.
% Note also that the invariant {\it \_waitq[\_head\%Size] \% \_head == 0} holds for all the array positions.

% The only drawback of this simple implementation is that it requires space that is proportional to the number of CPUs in the system, but this is not an issue if we only use one or a few of these locks.
% Their performance under both low and high-load conditions is equal or better than more complex implementations such as MCS \cite{10.1145/103727.103729} or Ticket Locks Augmented with a Waiting Array (TWA) \cite{TWA}. These latter reduce the required memory space of PTLocks by sharing the waiting array across all the lock instances.

\subsection{Delegation Ticket Lock (DTLock)}
\label{sec:dtlock}
Another well-known technique to improve the performance of data structures that are not amenable to fine-grained locking is delegation \cite{10.1145/3132747.3132771}.
The main idea behind this method is that protected data structures are accessed only by one privileged thread, which executes all the operations on behalf of the other threads.
Delegation relies on a lightweight and optimized communication protocol between the privileged thread and the rest of the threads to be able to delegate operations and forward results back.
%In the case of flat combining, the operations are combined by the privileged thread before they are applied to the data structures to increase overall performance.
A drawback of this approach is that it requires a dedicated core for each independent set of data structures that has to be protected, making it unpractical in many situations.

There are variants of the delegation technique that use queues to delegate work to the threads currently inside the lock \cite{queuedelegationlock}.
These are better suited to use in centralized schedulers as they do not require dedicated cores for each lock.
In order to build our scalable centralized scheduler, we have developed a novel Delegation Ticket Lock (DTLock) that builds on state of the art delegation techniques and extends our implementation of the PTLock with support for fine-grained and dynamic delegation of operations.
The DTLock supports the standard {\it lock}, {\it unlock} and {\it trylock} operations, as well as {\it lockOrDelegate}, {\it empty}, {\it front}, {\it popFront} and {\it setItem}.
The {\it lockOrDelegate} operation either acquires the lock if it is free or delegates the operation to the current lock's owner.
However, it is the current owner that will decide if it executes or not the delegated operations.
Suppose the current owner releases the lock without performing a pending delegated operation.
In that case, the thread that originally delegated that operation will eventually acquire the lock and execute it by itself.
The {\it empty}, {\it front}, {\it popFront} and {\it setItem} operations can only be called by the thread that owns the lock and are used to manage the threads that are waiting outside the lock.

The main advantages of the DTLock are that it does not require a dedicated core and its additional operations can be freely combined with traditional {\it lock}, {\it unlock} and {\it trylock} operations.
Additionally, the DTLock allows for threads to remain inside the critical section of the lock executing delegated operations.
This can be leveraged to minimize operation latency when there is not enough work to keep all cores busy.

% \begin{minipage}{\linewidth}
	\lstinputlisting[language=C++, label=lst:DTL, caption=Implementation of a Delegation Ticket Lock, firstline=5, float]{code/DelegationLock.hpp}
% \end{minipage}

Listing \ref{lst:DTL} shows the implementation of a DTLock in C++, which inherits the PTLock's {\it \_head}, {\it \_tail} and {\it \_waitq[]} members, as well as, {\it lock}, {\it unlock} and {\it tryLock} operations.
The DTLock extends the PTLock with two additional arrays. The {\it \_logq} array (line 3) is used to register waiting threads while the {\it \_readyq} array (line 4) is used to store the result of delegated operations, i.e. ready tasks in our case.

The first parameter of the {\it lockOrDelegate} operation is a unique \textit{id} that identifies the thread that is calling this function. This \textit{id} should be in the range {\it 0..Size-1} as it is used to index the {\it \_readyq} array. Thus, we need to know in advance the maximum number of threads that can call the DTLock. If the {\it lockOrDelegate} operation is finally delegated, the second parameter is used to store the result.

The first thing done in {\it lockOrDelegate} is to obtain a ticket (line 7).
Then, the thread is registered on the {\it \_logq} (line 8) array with just one store operation that combines the ticket and calling thread's id.
The values written on the {\it \_logq} array cannot be overrun because it is guaranteed that there will be at most {\it Size} threads calling the {\it lockOrDelegate} operation, so that each thread will have their own position.
Once the thread has been registered, it just busy-waits (line 9) until it acquires the lock (lines 11-12), or the operation has been delegated and the result is stored in the {\it \&item} parameter (line 14).

The {\it empty} operation (line 17) returns {\it true} if there is no thread registered in the {\it \_logq} array and {\it false} otherwise. We check the first position of the {\it \_logq} array, and if the value is smaller than {\it \_tail}, we know there is no thread waiting yet. Otherwise we would see the value written in line 8. Notice that this operation is intrinsically racy but harmless.

If a call to {\it empty} returns {\it false}, then the owner of the lock can call {\it front} (line 20) to obtain the id of the thread that is waiting. To that end, we only need to do the inverse of the operation done on line 8, subtracting the {\it \_tail}'s value to obtain the thread id (line 21).

Then, the {\it setItem} operation assigns a result {\it T} to a registered thread using its id to index the {\it \_readyq} array. First, it sets the {\it item} field (line 27) and then the {\it ticket} to the {\it \_tail} value, marking the entry as valid. We use the {\it ticket} field in line 10 to determine if an operation was delegated or not.
Finally, the {\it popFront} operation wakes up the first thread busy-waiting on the {\it \_waitq} by executing the {\it unlock} operation.

\subsection{Synchronized Scheduler}
This section presents a synchronized scheduler that leverages the SPSC queues and the DTLock described in sections \ref{sec:spsc} and \ref{sec:dtlock}, respectively.

\lstinputlisting[language=C++, label=lst:SynchSched, caption=Implementation of the Synchronized Scheduler using a Delegation Ticket Lock, firstline=11]{code/SyncScheduler.hpp}
% XXX listing is split into two pages! (only the last line with the closing bracket and the caption) I tried to fix it but latex skills not found

Listing \ref{lst:SynchSched} shows the implementation of a synchronized scheduler using a DTLock (line 2) and a SPSC wait-free queue (line 5) to synchronize the {\it getReadyTask} and {\it addReadyTask} operations, respectively. The SyncScheduler is a wrapper of the unsynchronized scheduler (line 3), which implements the actual scheduling policy.

The PTLock (line 4) protects the producer side of the SPSC queue in the {\it addReadyTask} function (line 14), while the DTLock protects the consumer side in the {\it processReadyTasks} function. In the {\it addReadyTask} function, if there is no free space on the SPSC queue, the thread will try to acquire the DTLock with a {\it tryLock} operation (line 17). If it succeeds, it will call the {\it processReadyTasks} function (line 18), which removes all tasks from the SPSC queue and inserts them to the unsynchronized scheduler (line 8).

We use the {\it lockOrDelegate} operation (line 25) of the DTLock to synchronize {\it getReadyTask} operations.
If the operation is delegated because another thread owns the DTLock, the calling thread will busy-wait until it gets a task (lines 25-26).
Otherwise, it will eventually acquire the lock and call the {\it processReadyTasks} (line 27) to add the tasks that are waiting on the SPSC queue into the unsynchronized scheduler.
Then, it will try to schedule a task for each of the threads that are waiting on the DTLock (lines 28-33) until there are no more waiting threads (line 28) or no ready tasks are left (line 31).
At that point, it will try to get a task for him (line 35) and then release the DTLock (line 36).
In our simplified design, the thread inside the scheduler leaves as soon as there are no more tasks to schedule.
However, it is easy to extend our design in a way that {\it processReadyTasks} is called when no tasks are left inside the scheduler.

\begin{figure}
    \centering
    \includegraphics[width=.7\columnwidth]{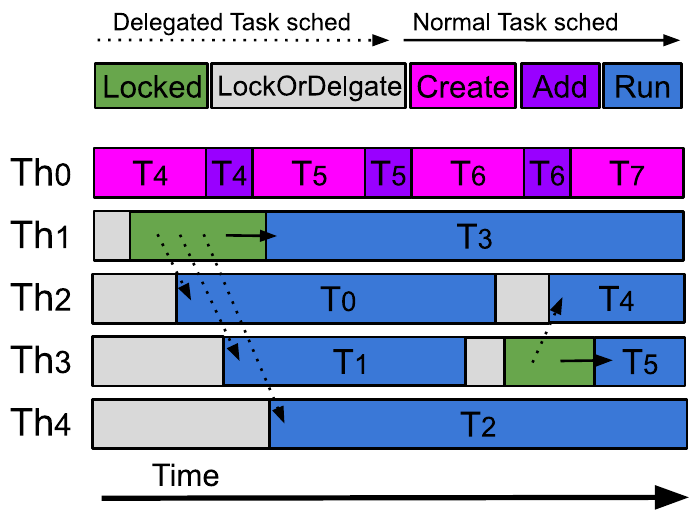}
	\caption{Timeline of five threads using a DelegationLock to add and get ready task into the scheduler.}
	\Description{Timeline of five threads using a DelegationLock to add and get ready task into the scheduler.}
    \label{fig:timelinesyncscheduler}
\end{figure}

% Task T0 to T3 are inside the SPSC queue. Th0 is creating tasks (T4 to T7) and Th1 to Th4 are executing tasks

Figure \ref{fig:timelinesyncscheduler} shows a timeline of five threads creating and scheduling tasks using the SyncScheduler and a simple FIFO scheduling policy.
$Th_0$ has already created and inserted three tasks ($T_0$ - $T_3$) that are inside the SPSC queue before creating and inserting four additional tasks ($T_4$ - $T_7$). $Th_1$ to $Th_4$ call the {\it getReadyTask} function, one after the other, to obtain a ready task.
The call to {\it lockOrDelegate} of $Th_1$ acquires the lock, while the other threads delegate and busy-wait. Once $Th_1$ is inside the lock, it calls {\it processReadyTasks} and inserts tasks $T_0$ to $T_3$ into the actual scheduler. Then, $Th_1$ schedules one ready task for each of the waiting threads, and finally, it gets a ready task for itself. When $Th_3$ finishes the execution of $T_1$, it calls again to {\it getReadyTask}, and just after that, $Th_2$ does the same. $Th_3$ acquires the lock and calls {\it processReadyTasks}, moving the tasks from the SPSC queue ($T_4$ and $T_5$) to the actual scheduler. Finally, $Th_3$ schedules $T_4$ for $Th_2$, and then, it executes $T_5$.

In microbenchmarks, we found a fourfold speedup on task scheduling using a DTLock compared to a PTLock, and a twelvefold speedup compared to serial task insertion thanks to the SPSC queues.
% XXX add something like (not sure if it will clarify the issue or it will add more noise, so it is up to you): Notice that the sequential version has to insert the task inside the scheduler data-structures that implemenet the scheduling policy, while the other version only inserts the task into the SPSC queue.
%, and the use of SPSC queues can provide a 12x speedup on task insertion versus the serial version.

%\begin{figure}
%    \centering
%    \includegraphics[width=\columnwidth]{}
%    \caption{Diagram of a Delegation Ticket Lock that extends a Partitioned Ticket Lock with a {\it log} queue and a {\it ready} array. Th0 lock and unlock, Th1 subscribeOrLock, Th2 subscribeOrLock, Th3 }
%    \label{fig:delegationlock}
%\end{figure}

\section{Memory management}

When optimizing a runtime to achieve the lowest overhead, every operation that requires synchronization between thre\-ads quickly becomes a bottleneck.
This is the case for memory allocation.
Some general-purpose allocators are not well suited to handle a high volume of memory requests in many-core systems.
Many implementations require the serialization of every allocation in the system.
Additionally, the operating system may introduce even more overhead when allocators request more memory areas through system calls.

In the Nanos6 runtime case, removing contention from the scheduler and dependencies caused an even more significant bottleneck on memory allocation.
However, the current state of the art techniques for scalable memory allocation can be applied to any software \cite{hoard, jemalloc}, solving most of the contention problems.
To solve this bottleneck and achieve the performance presented in this article, we had to substitute the default allocator in Nanos6 for Jemalloc, a widely used scalable memory allocator.

% In the case of the Nanos6 runtime, relying on the system allocator proved to cause contention on \textit{malloc}/\textit{free} calls.
% When we set off to improve the time spent allocating memory, we tried to reduce the number of allocations. The idea is to group
% as many allocations as possible to reduce the calls to \textit{malloc}. An optimization we did in this direction was to provide from the compiler
% to the runtime system the exact number of data accesses that each task contains. Thus, in a single allocation, we can reserve space for the task
% and all the data accesses, rather than using a call per access.

%
% Scalable allocation is an active topic of research, and there are several alternatives to the \textit{glibc} implementation.
% For example, Hoard \cite{hoard} and Jemalloc \cite{jemalloc} are fast, scalable, and memory-efficient allocators that are well tested and have seen a fair amount of use.
% In particular, Jemalloc is used in the current version of Firefox and is the default allocator for FreeBSD.

% Nevertheless, that was not enough, and memory allocation kept being a bottleneck so we moved on to replace the default \textit{glibc} allocator used in Nanos6.
% We explored several scalable allocators \cite{hoard} but finally settled on Jemalloc \cite{jemalloc}.
% We tested and tuned Jemalloc for our specific use case, and made use of the non-standard API that it offers to provide hints.
% For example, we use the \textit{sdallocx} call to provide the original size of the deallocations, preventing one lookup that could require to lock several threads.

\section{Instrumentation}
Analyzing runtime performance and finding problems or bottlenecks in the different component requires a mechanism to collect fine-grained instrumentation data.
This instrumentation must have a very low overhead, which is difficult to achieve on a very optimized runtime.
Additionally, runtime systems are sensitive to OS noise (such as thread preemptions), making exploiting kernel internals particularly useful when evaluating latency-critical features, such as those presented in this article.
For this reason, we have developed a new tracing backend aiming at minimum overhead and with both runtime and kernel tracing capabilities.

The backend generates traces in the Common Trace Format (CTF) \cite{ctf}, which strives for fast data writes.
Instrumentation overhead is minimized by storing events on lock-free NUMA-aware per-core circular buffers.
Each buffer is divided into page-aligned sub-buffers that, when full, are periodically flushed to a tmpfs backed file by Nanos6 threads between tasks execution.
Each file contains a time-ordered event subset of the final trace, with either kernel or user events.

%Nanos6 threads write events on the lock-free per-core buffer they are pinned to.
%However, external threads (not controlled by Nanos6) write events in a shared buffer protected with a lock.
%Nevertheless, the interaction of external threads during an execution is minimum and the contention negligible.

Nanos6 threads write events on the lock-free per-core buffer they are pinned to.
User-selected Kernel events are obtained from a per-core memory-mapped circular buffer exported by the Linux Kernel through the {\it perf\_event\_open()} system call.
Between task executions, Nanos6 threads read and format the kernel events according to the CTF specification and move them to an exclusive kernel-events Nanos6 per-core circular buffer.

%The final CTF trace is translated to Paraver format with the ctf2prv tool.
%ctf2prv relies on the babeltrace2 library \cite{babeltrace} to parse and sort events distributed on all files.
%ctf2prv recreates an execution model of both Linux and Nanos6 based on the read events, which is used to contextualize punctual event information with global system context.

\section{Evaluation}
In this section, we evaluate the effects of the different optimizations that we present in this paper.
To evaluate their impact, we have prepared different versions of the Nanos6 runtime system, where each one removes one of the three optimizations.
This methodology allows for a better understanding of which optimizations have the most significant impact on runtime performance.
To prove that these optimizations make Nanos6 one of the lowest-overhead task runtimes, we also compare our most optimized version with the most relevant OpenMP implementations.
We conduct our experiments on three HPC machines.

\subsection{Methodology}

To evaluate the task-based runtimes and check the capability of scaling to more finely partitioned work, we will use the following benchmarks, running constant problem sizes and varying the task granularity. These are (1) a \textbf{Dot product} between two arrays that uses a task reduction to aggregate the results from each block, (2) an iterative \textbf{Gauss-Seidel} method solving the heat equation of a 2-D matrix in blocks and task reductions to calculate the residual of each time step, (3) a taskified \textbf{HPCCG} with with several kernels using task reductions and multi-dependencies, (4) a taskified version of \textbf{Lulesh} 2.0 \cite{lulesh}, (5) a taskified \textbf{miniAMR} that mimics the different patterns of Adaptive Mesh Refinement applications \cite{miniAMR, sala2020towards}, (6) a classic parallel blocked \textbf{Matmul}, (7) an \textbf{NBody} benchmark that mimics dynamic particle system simulations, and (8) a blocked \textbf{Cholesky} decomposition that is generally compute bound.

% The results will be compared between the different optimizations to determine which benchmarks are affected by the changes and how much they affect from having a more naive implementation.
We ran our experiments on various HPC platforms featuring very different architectures:
(1) the \textbf{Intel Xeon} with 2x Intel Xeon Platinum 8160 (Sky-lake) for a total of 48 cores,
(2) the \textbf{ARM Graviton2} with 64 ARM Neoverse N1 cores, and
(3) the \textbf{AMD Rome} with 2x AMD EPYC 7H12 processors for a total of 128 cores and 256 hardware threads.

For all benchmarks, the parallelization is implemented using tasks with OpenMP and OmpSs-2 versions that feature the same amount of parallelism.
However, the kernels used inside each task are sourced from the best available vendor library for each machine, to guarantee competitive performance.
In the AMD and Intel machines this library was Intel MKL, and on the ARM Graviton2 we used the ARM Performance Libraries.
We ran each benchmark a minimum of five times to extract each measurement.

Following this evaluation, we will compare our optimized Nanos6 runtime with the most relevant OpenMP implementations for each machine, including GOMP 9.2.0 \cite{GOMPSource}, LLVM 10 \cite{LLVMSource}, Intel OpenMP and the AMD AOCC depending on availability for each platform.
It is worth noting that both the LLVM, AMD AOCC and Intel OpenMP runtime are based on a work-stealing scheduler, which will allow us to determine if our centralized delegation-based implementation can outperform work-stealing runtimes.

Note that some combinations might not be available depending on the platform because of incompatibilities, non-implemented OpenMP features or compiler bugs. For brevity, we only show the four most relevant benchmarks for each machine.

\subsection{Results}

The best way we found to evaluate objectively how each component of the runtime affects the overall performance is to remove the optimizations we have described selectively. % ??
To present the results, we use a metric \cite{taskbench} we will refer to as \textit{efficiency}.
It is calculated by dividing the performance of a specific run of a benchmark by the peak performance obtained across all executions. % XXX Should we say what is performance?
This \textit{efficiency} provides a view of how close to peak performance is a specific run while being agnostic to benchmark specific units.
Combining this metric with varying task granularity \cite{OpenMPGranularities, WorksharingMarcos} gives a good view of each runtime version's scalability.
The granularity is expressed in instructions executed per task, which gives an approximation of the task's size.
We chose this unit instead of using time or cycles because the scheduling policies used by each of the runtimes can affect the execution time of a task, and thus it could result in unfair comparisons.

The runtime version without the wait-free dependencies uses the previous dependency implementation based on fine-grained locking.
The variant without the DTLock has a simple mutual exclusion mechanism (based on the PTLock) protecting the scheduler.
Finally, the version without jemalloc uses the standard system allocator for each of the machines.

\begin{figure}
	\centering
	\subcaptionbox*{}{
		\includegraphics[width=\columnwidth]{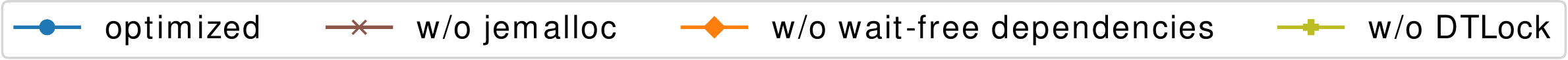}
	}\\[-2.5ex]
	\setlength{\tabcolsep}{0pt}
	\subcaptionbox*{}{
		\begin{tabular}{p{2.5mm} c}
			\multirow{2}{*}{\hspace{-1mm}\rotatebox[origin=lc]{90}{\footnotesize{Efficiency}}} &
			\includegraphics[width=.45\columnwidth]{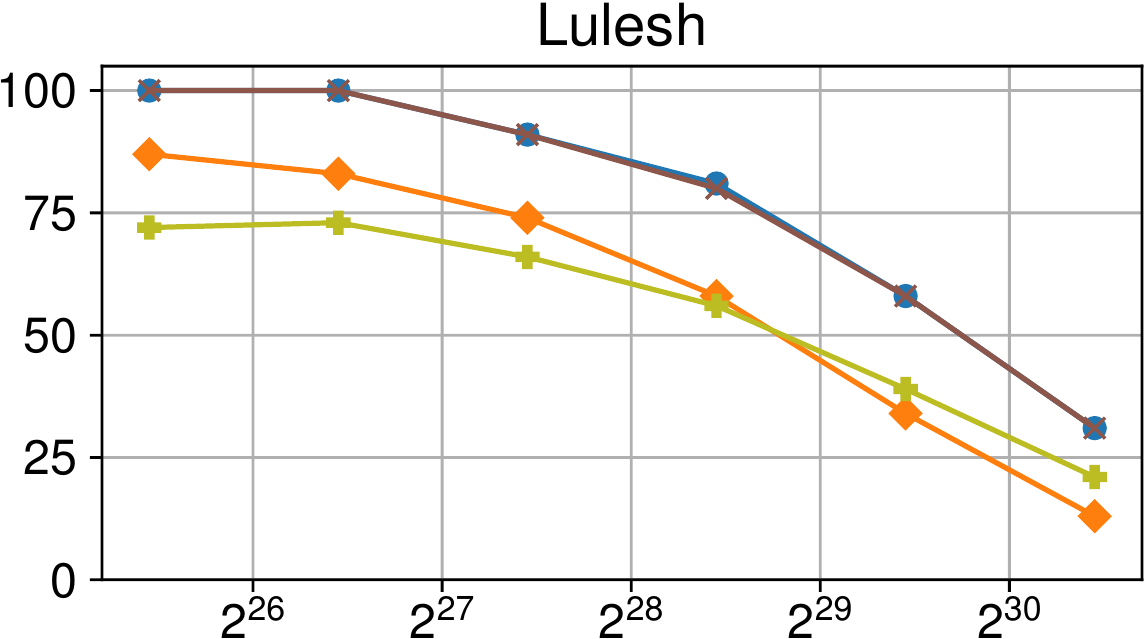}
			\\
			&\includegraphics[width=.45\columnwidth]{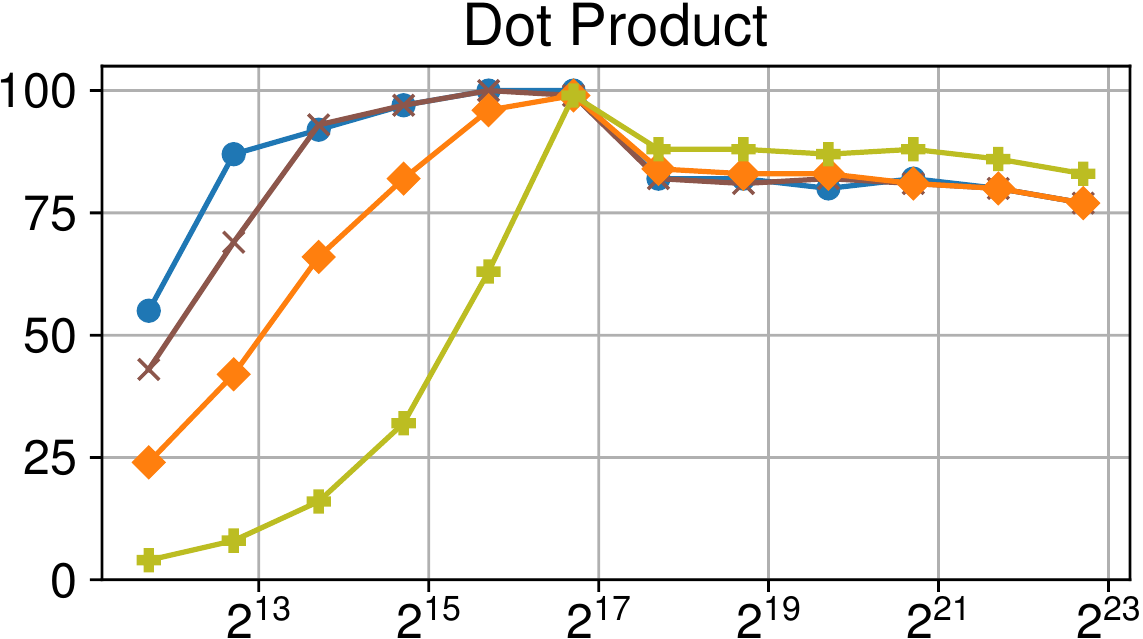}
			\\
			& \multicolumn{1}{r}{\footnotesize{Task}}
		\end{tabular}
	}
	\subcaptionbox*{}{
		\begin{tabular}{c}
			\includegraphics[width=.45\columnwidth]{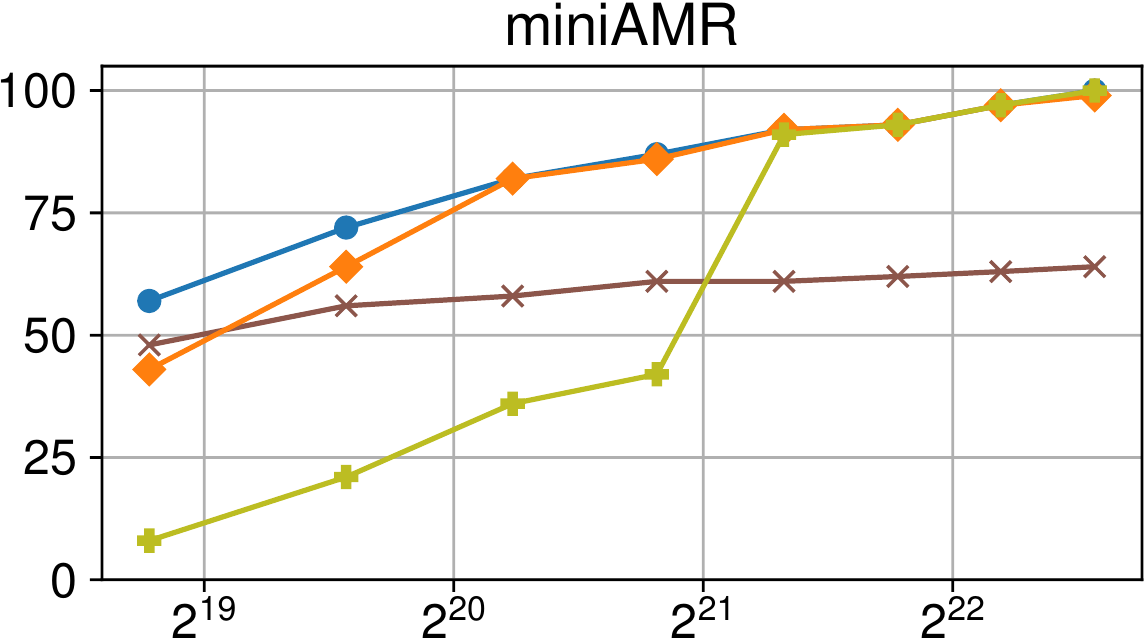}
			\\
			\includegraphics[width=.45\columnwidth]{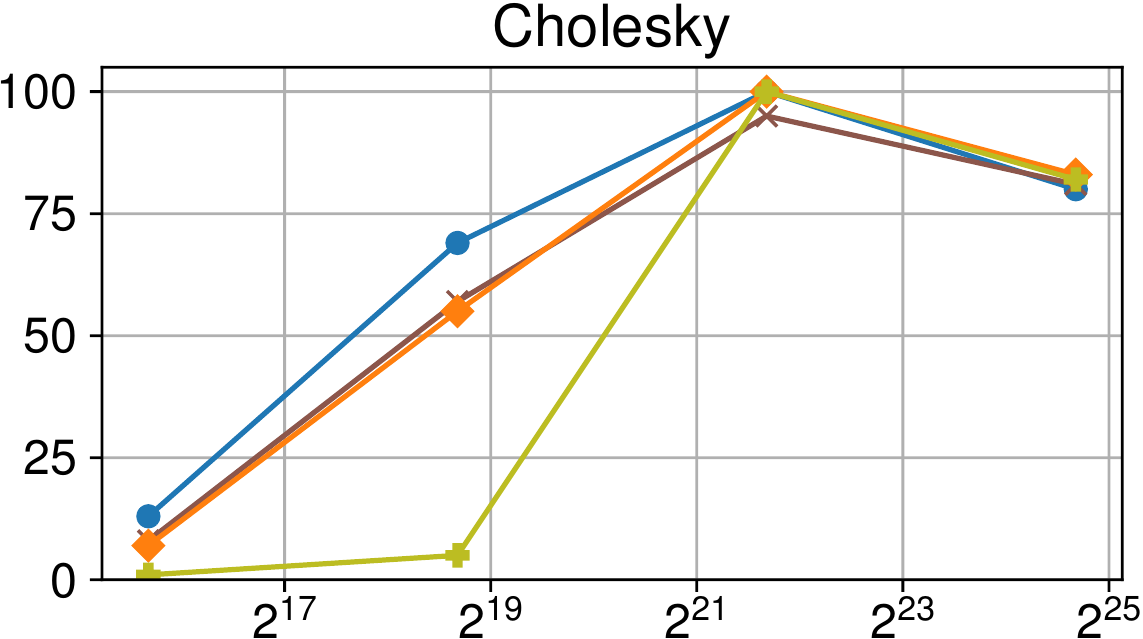}
			\\
			\multicolumn{1}{l}{\footnotesize{granularity}}
		\end{tabular}
	}
	\vspace{-2em}
	\caption{Efficiency vs task granularity of the Nanos6 runtime with and without the described optimizations on Intel Xeon (higher is better)}
	\Description{Efficiency vs task granularity of the Nanos6 runtime with and without the described optimizations on Intel Xeon (higher is better)}
	\label{fig:results-nanos}
\end{figure}

Figure \ref{fig:results-nanos} displays our benchmarks running in the Intel Xeon platform.
The different versions allow us to explore precisely how each optimization affects the scalability for different benchmarks.
The results also confirm that for every benchmark at least one optimization greatly increases the performance for fine-grained tasks.

\begin{figure}
	\centering
	\subcaptionbox*{}{
		\includegraphics[width=\columnwidth]{results/Legend.pdf}
	}\\[-2.5ex]
	\setlength{\tabcolsep}{0pt}
	\subcaptionbox*{}{
		\begin{tabular}{p{2.5mm} c}
			\multirow{2}{*}{\hspace{-1mm}\rotatebox[origin=lc]{90}{\footnotesize{Efficiency}}} &
			\includegraphics[width=.45\columnwidth]{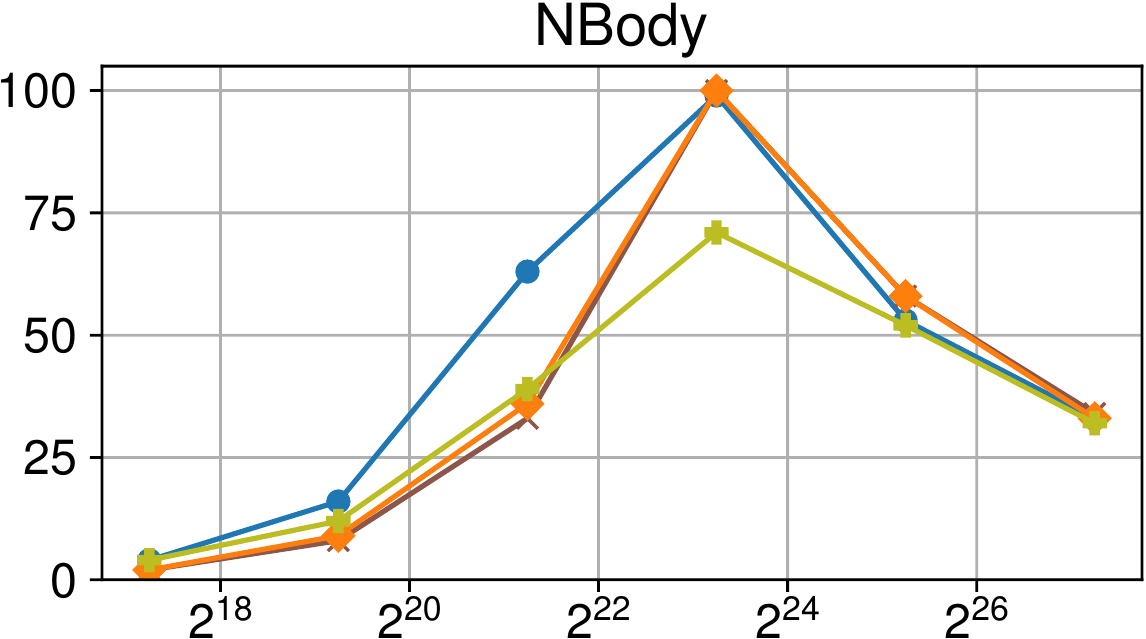}
			\\
			&\includegraphics[width=.45\columnwidth]{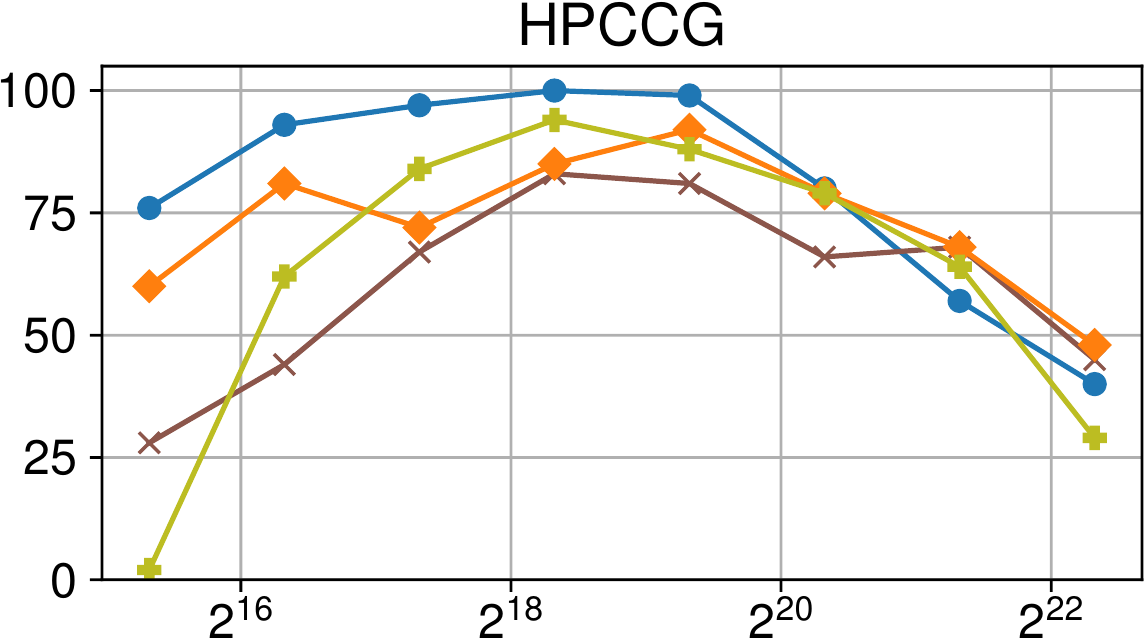}
			\\
			& \multicolumn{1}{r}{\footnotesize{Task}}
		\end{tabular}
	}
	\subcaptionbox*{}{
		\begin{tabular}{c}
			\includegraphics[width=.45\columnwidth]{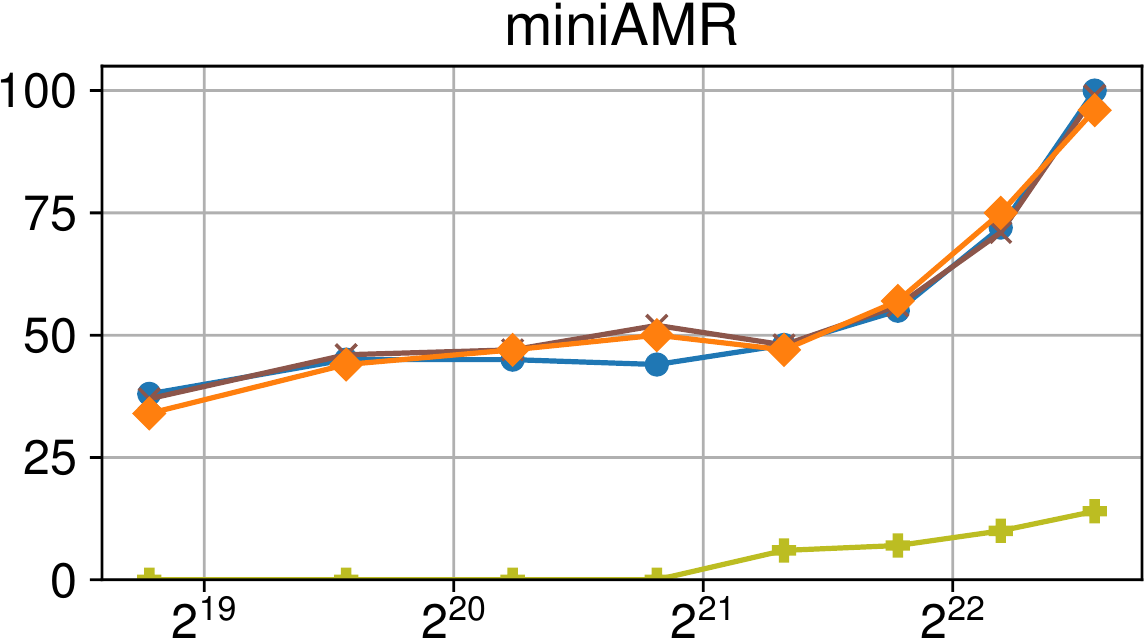}
			\\
			\includegraphics[width=.45\columnwidth]{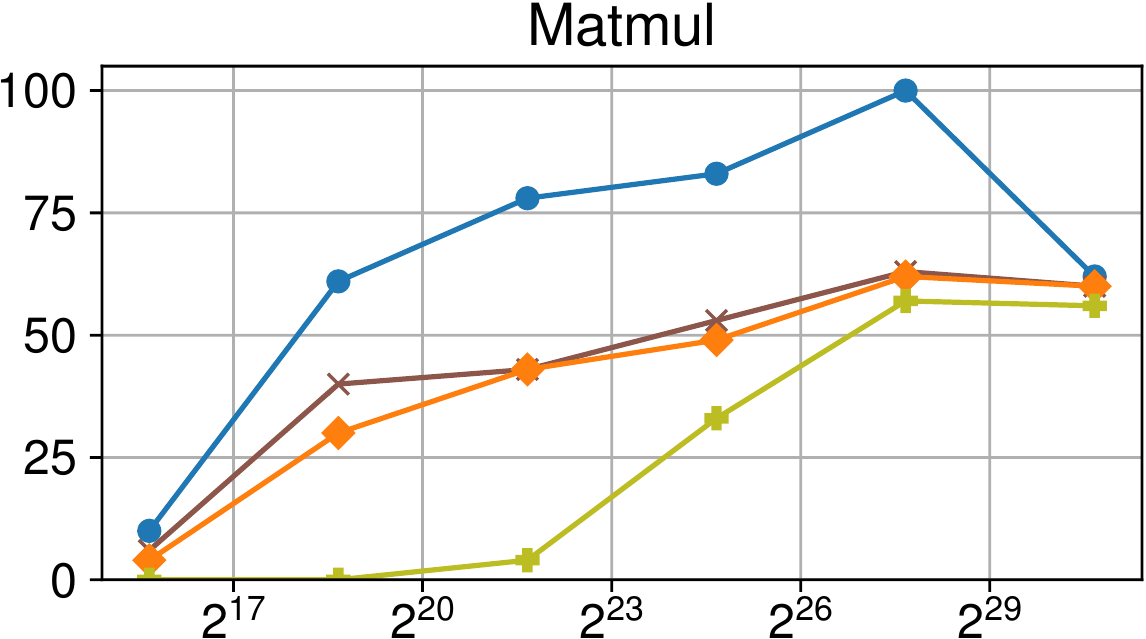}
			\\
			\multicolumn{1}{l}{\footnotesize{granularity}}
		\end{tabular}
	}
	\vspace{-2em}
	\caption{Efficiency vs task granularity of the Nanos6 runtime with and without the described optimizations on AMD Rome (higher is better)}
	\Description{Efficiency vs task granularity of the Nanos6 runtime with and without the described optimizations on AMD Rome (higher is better)}
	\label{fig:results-nanos-amd}
\end{figure}

Figure \ref{fig:results-nanos-amd} shows a similar picture on the AMD Rome system.
This system has a much larger number of CPUs, which can increase the performance degradation caused by heavily contended locks.
Illustrating this point, we see that the scheduler optimization is much more relevant than in the Intel Xeon.
The clearest example is seen on the \textit{miniAMR} benchmark, which we analyze with detailed traces in subsection \ref{sec:traces}.

\begin{figure}
	\centering
	\subcaptionbox*{}{
		\includegraphics[width=\columnwidth]{results/Legend.pdf}
	}\\[-2.5ex]
	\setlength{\tabcolsep}{0pt}
	\subcaptionbox*{}{
		\begin{tabular}{p{2.5mm} c}
			\multirow{2}{*}{\hspace{-1mm}\rotatebox[origin=lc]{90}{\footnotesize{Efficiency}}} &
			\includegraphics[width=.45\columnwidth]{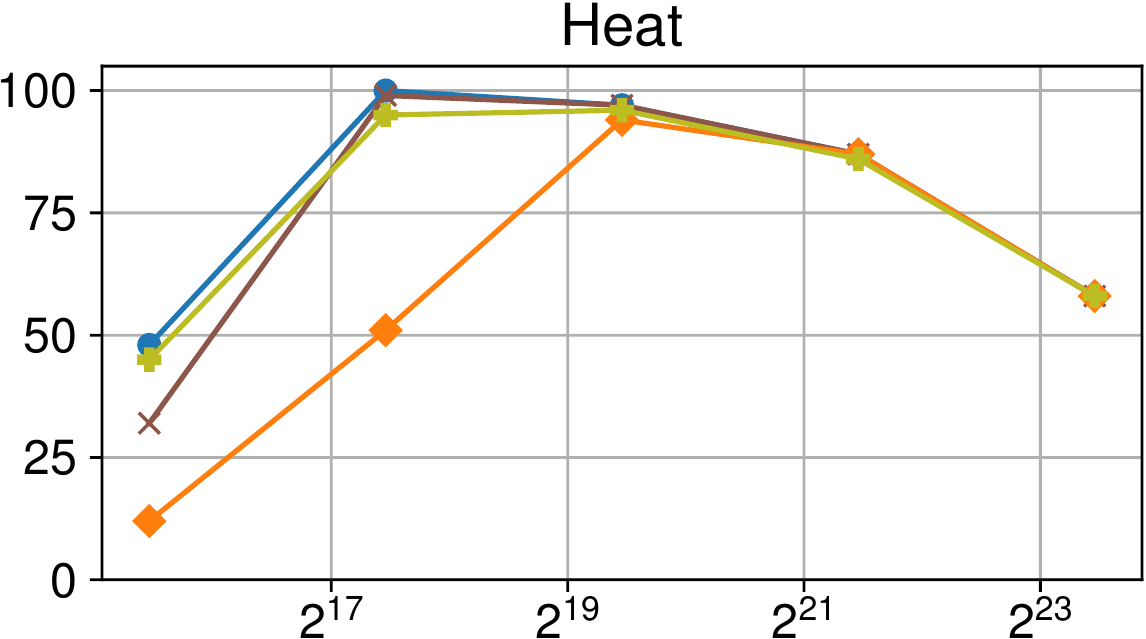}
			\\
			&\includegraphics[width=.45\columnwidth]{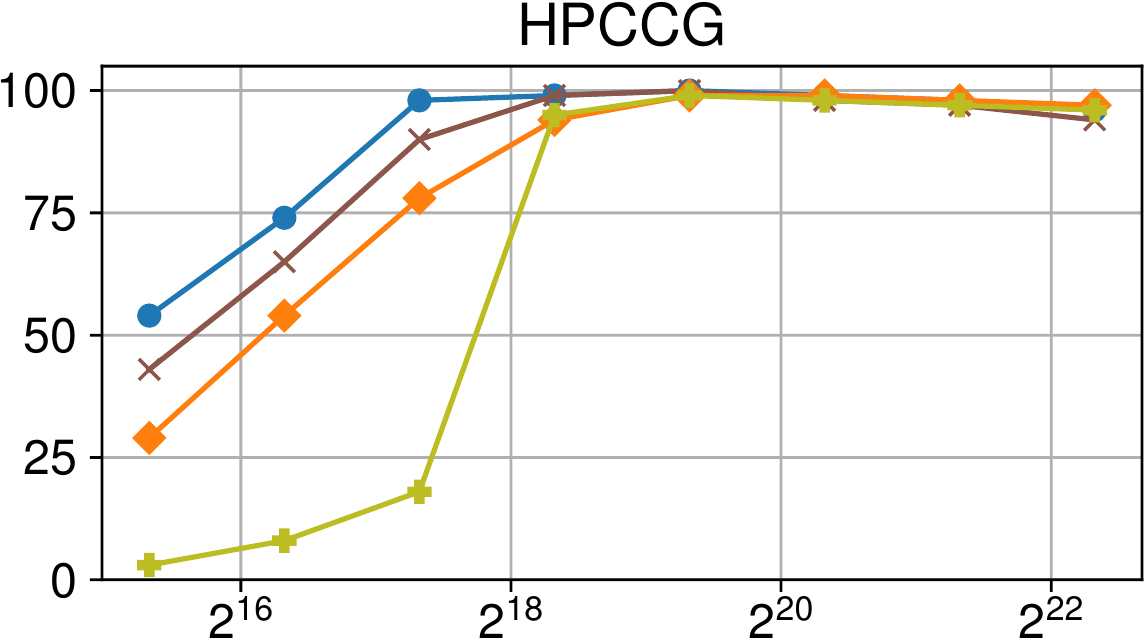}
			\\
			& \multicolumn{1}{r}{\footnotesize{Task}}
		\end{tabular}
	}
	\subcaptionbox*{}{
		\begin{tabular}{c}
			\includegraphics[width=.45\columnwidth]{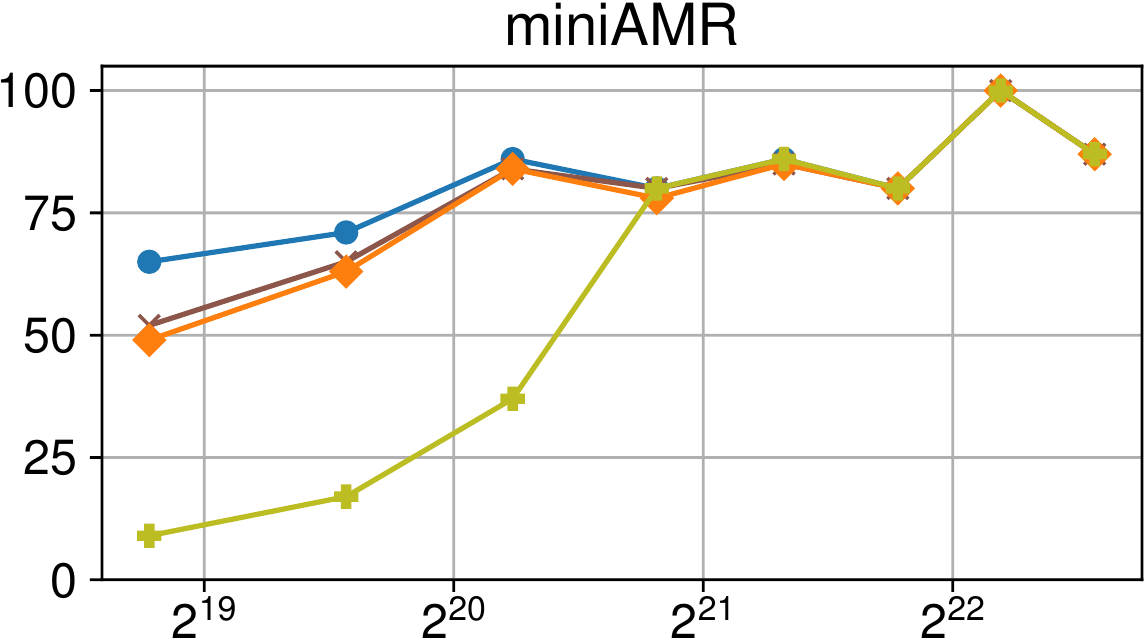}
			\\
			\includegraphics[width=.45\columnwidth]{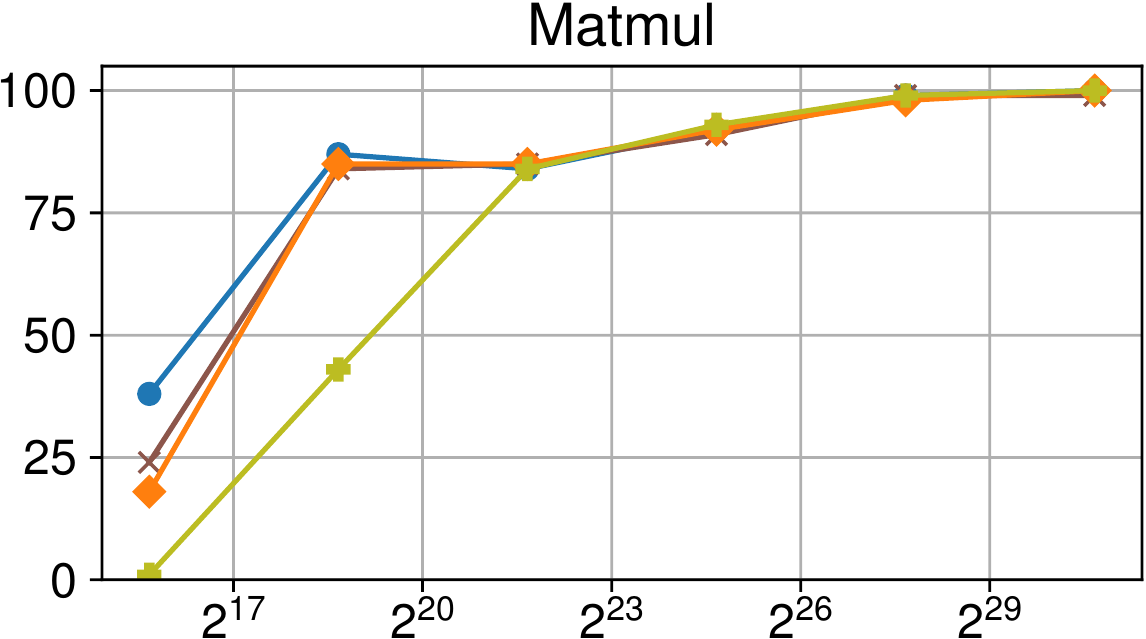}
			\\
			\multicolumn{1}{l}{\footnotesize{granularity}}
		\end{tabular}
	}
	\vspace{-2em}
	\caption{Efficiency vs task granularity of the Nanos6 runtime with and without the described optimizations on ARM Graviton2 (higher is better)}
	\Description{Efficiency vs task granularity of the Nanos6 runtime with and without the described optimizations on ARM Graviton2 (higher is better)}
	\label{fig:results-nanos-arm}
\end{figure}

Finally, Figure \ref{fig:results-nanos-arm} shows the same benchmarks running on an ARM Graviton2 system.
Results are similar to our Intel Xeon evaluation, although some benchmarks have different behaviors due to the lack of NUMA effects on this platform.

Overall, we have seen that our optimizations achieve significant performance gains, especially on small task granularities.
Depending on the benchmark, the wait-free dependencies or the scheduler are the most important optimizations.
However, the scalable memory allocator also delivers some notable performance improvements, especially on the Intel Xeon and AMD Rome machines.

\subsection{Comparison versus OpenMP}

To give context to the results, we compare the optimized Nanos6 runtime to current state-of-the-art OpenMP runtimes on the same three systems.
Our baseline for every benchmark will be the GOMP runtime, distributed with the GCC Compiler, and the LLVM OpenMP Runtime.
However, on Intel Xeon we use the Intel OpenMP runtime, and on AMD Rome the runtime provided by the AMD AOCC, as long as they implement all the features needed by the benchmark.
To ensure a fair comparison, we expressed the same parallelism in the OmpSs-2 and OpenMP versions of the benchmarks.
We also used available kernels in Intel MKL or the ARM Performace Libraries, to prevent noise introduced by the compilers to alter the results.

\begin{figure}
	\centering
	\subcaptionbox*{}{
		\includegraphics[width=.8\columnwidth]{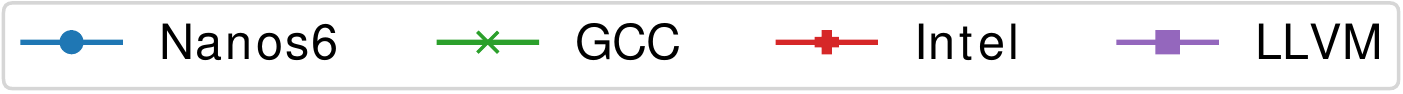}
	}\\[-2.5ex]
	\setlength{\tabcolsep}{0pt}
	\subcaptionbox*{}{
		\begin{tabular}{p{2.5mm} c}
			\multirow{2}{*}{\hspace{-1mm}\rotatebox[origin=lc]{90}{\footnotesize{Efficiency}}} &
		\includegraphics[width=.45\columnwidth]{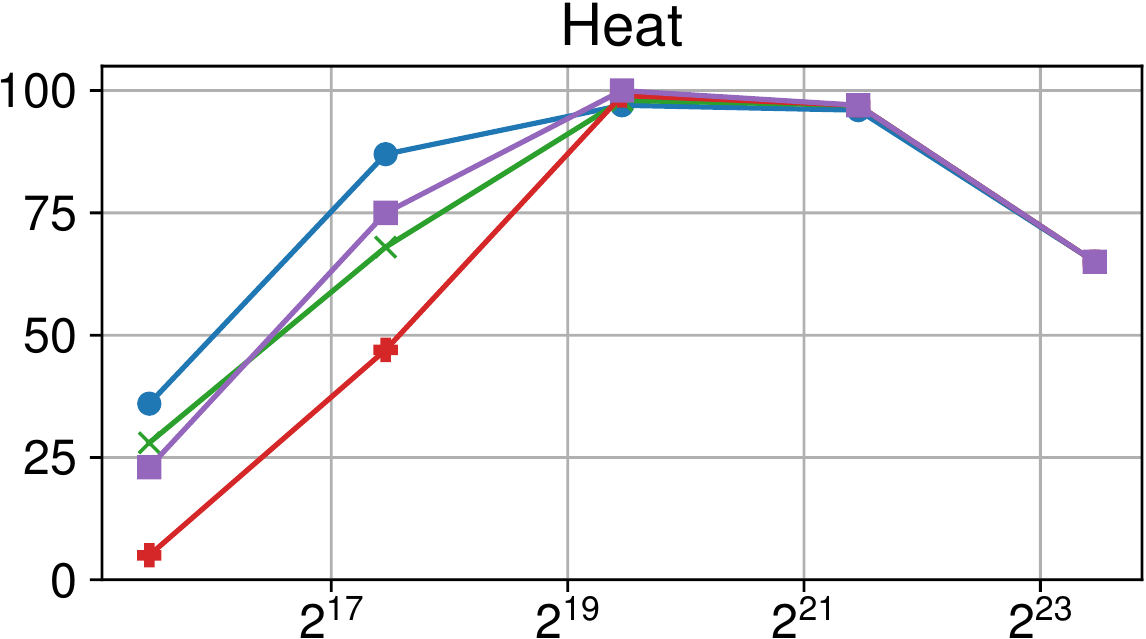}
		\\
		&\includegraphics[width=.45\columnwidth]{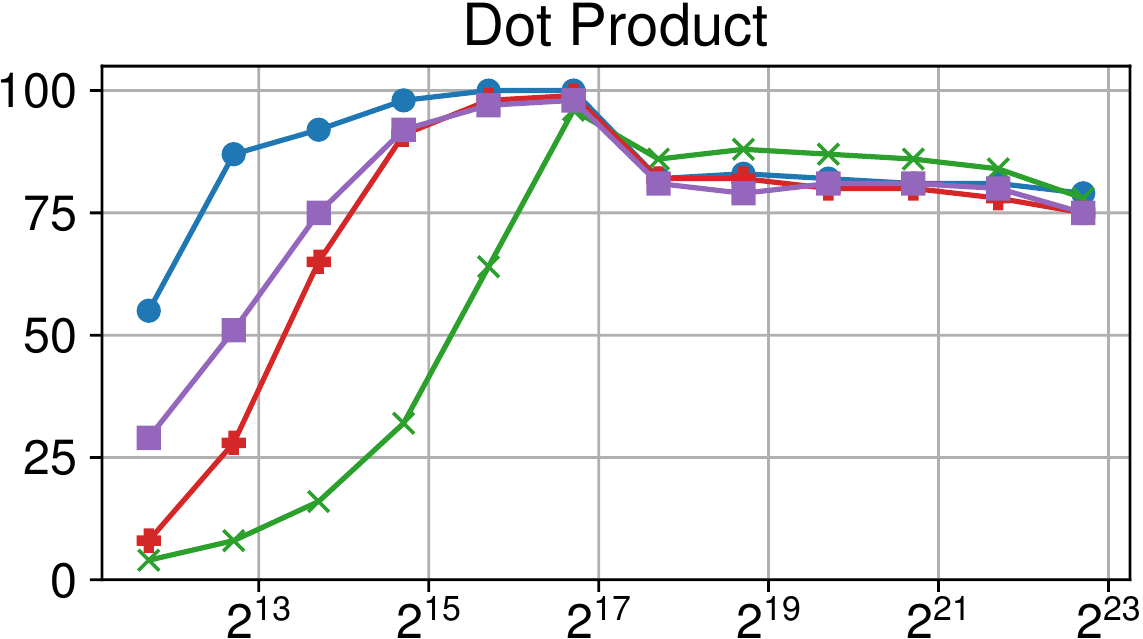}
		\\
		& \multicolumn{1}{r}{\footnotesize{Task}}
		\end{tabular}
	}
	\subcaptionbox*{}{
		\begin{tabular}{c}
		\includegraphics[width=.45\columnwidth]{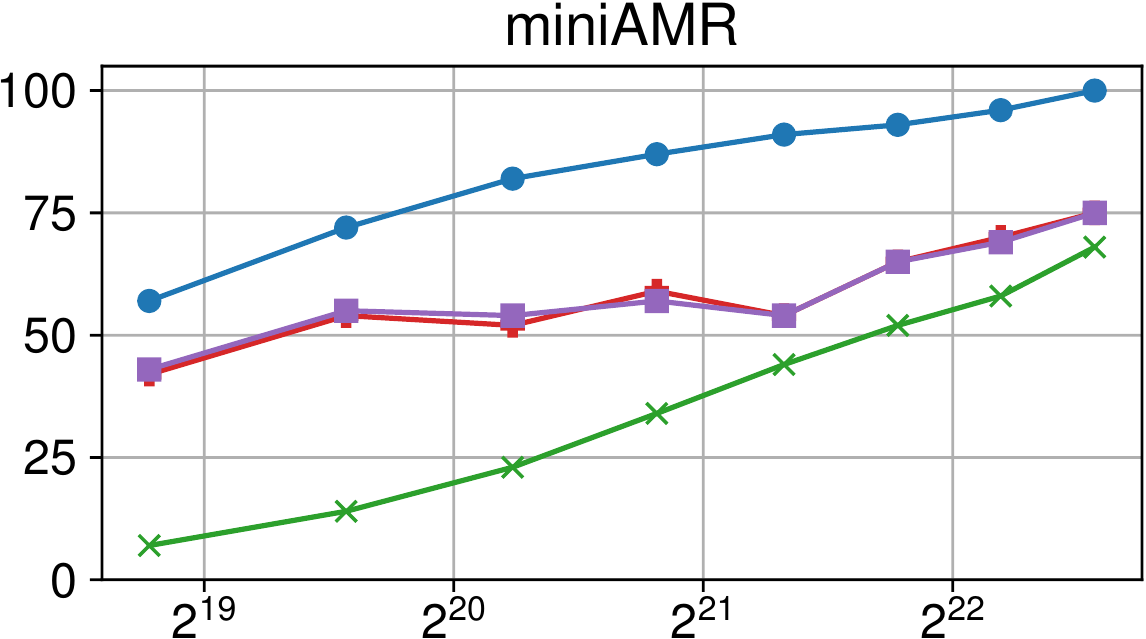}
		\\
		\includegraphics[width=.45\columnwidth]{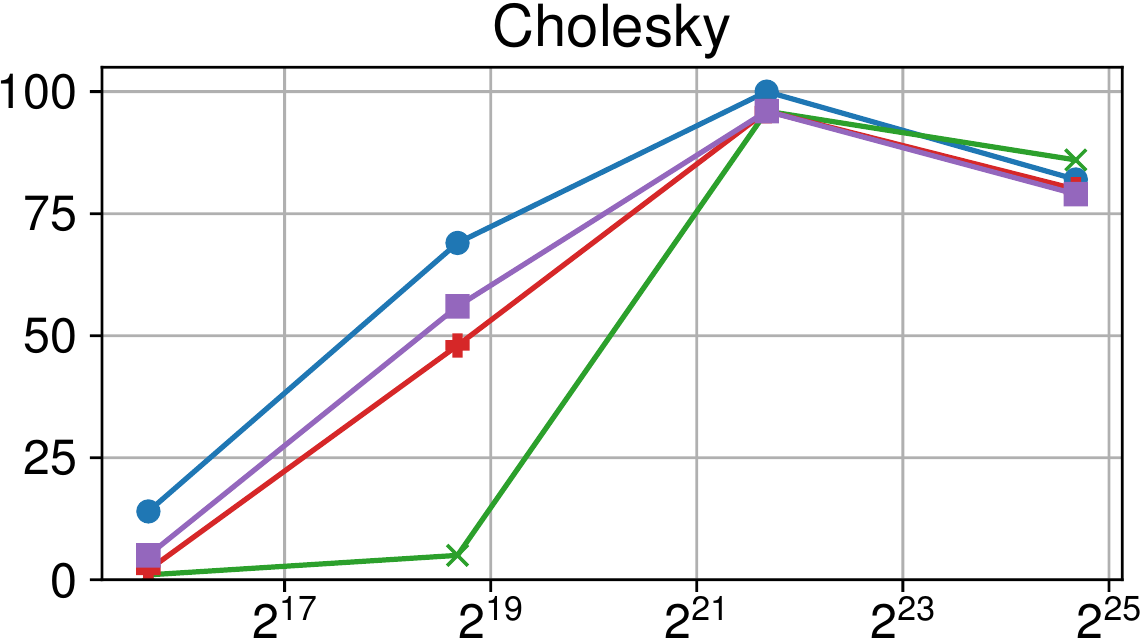}
		\\
		\multicolumn{1}{l}{\footnotesize{granularity}}
		\end{tabular}
	}
	\vspace{-2em}
	\caption{Comparison of performance between the current Nanos6 version and the main OpenMP runtimes on Intel Xeon (higher is better)}
	\Description{Comparison of performance between the current Nanos6 version and the main OpenMP runtimes on Intel Xeon (higher is better)}
	\label{fig:results-omp}
\end{figure}

Figures \ref{fig:results-omp}, \ref{fig:results-omp-amd} and \ref{fig:results-omp-arm} feature the results of comparing the current OpenMP implementation versus the optimized variant of the Nanos6 runtime.
The results are really positive, as in all of the machines, and all of the benchmarks, the best performance in small granularity tasks is provided by the Nanos6 runtime.
In some cases, a higher peak performance is also achieved.
This happens when the ideal block size for a specific benchmark is small enough that performs better in one runtime than another.

As for the other runtimes, the LLVM OpenMP implementation comes second in most benchmarks in terms of scalability, and ties on AMD Rome with the runtime provided with the AOCC compiler.
However, this is expected because the AOCC compiler is based on LLVM 10.

\begin{figure}
	\centering
	\subcaptionbox*{}{
		\includegraphics[width=.8\columnwidth]{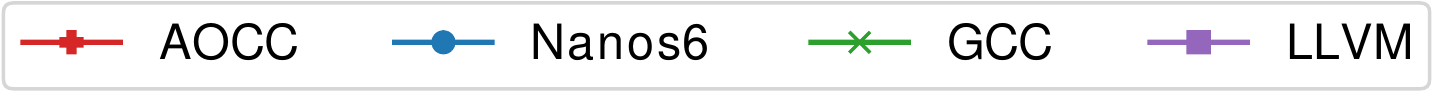}
	}\\[-2.5ex]
	\setlength{\tabcolsep}{0pt}
	\subcaptionbox*{}{
		\begin{tabular}{p{2.5mm} c}
			\multirow{2}{*}{\hspace{-1mm}\rotatebox[origin=lc]{90}{\footnotesize{Efficiency}}} &
		\includegraphics[width=.45\columnwidth]{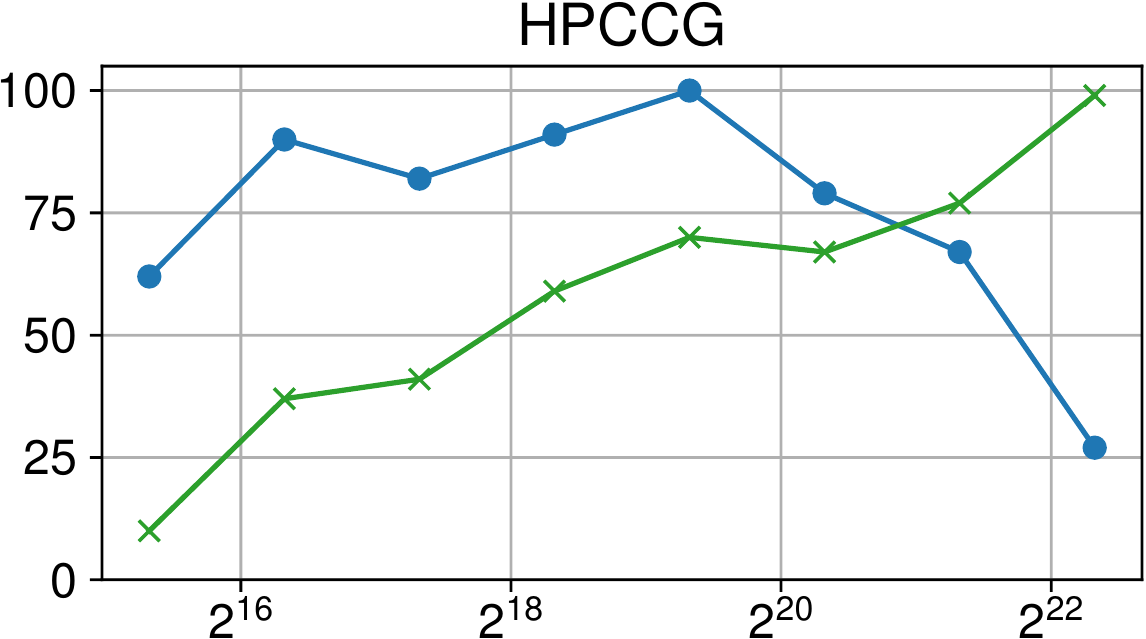}
		\\
		&\includegraphics[width=.45\columnwidth]{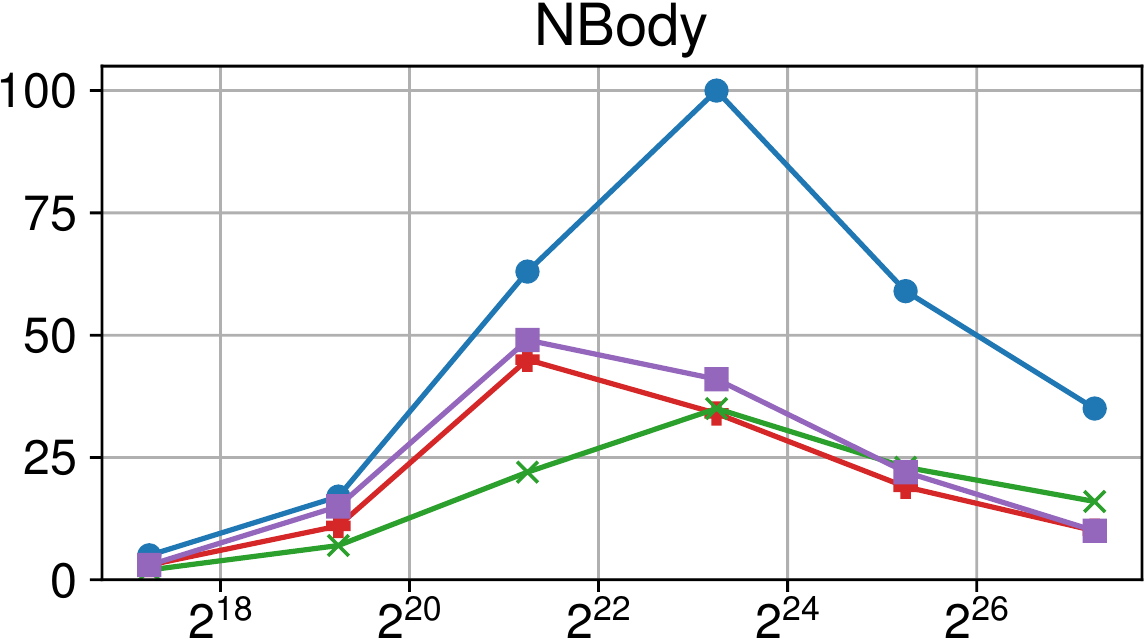}
		\\
		& \multicolumn{1}{r}{\footnotesize{Task}}
		\end{tabular}
	}
	\subcaptionbox*{}{
		\begin{tabular}{c}
		\includegraphics[width=.45\columnwidth]{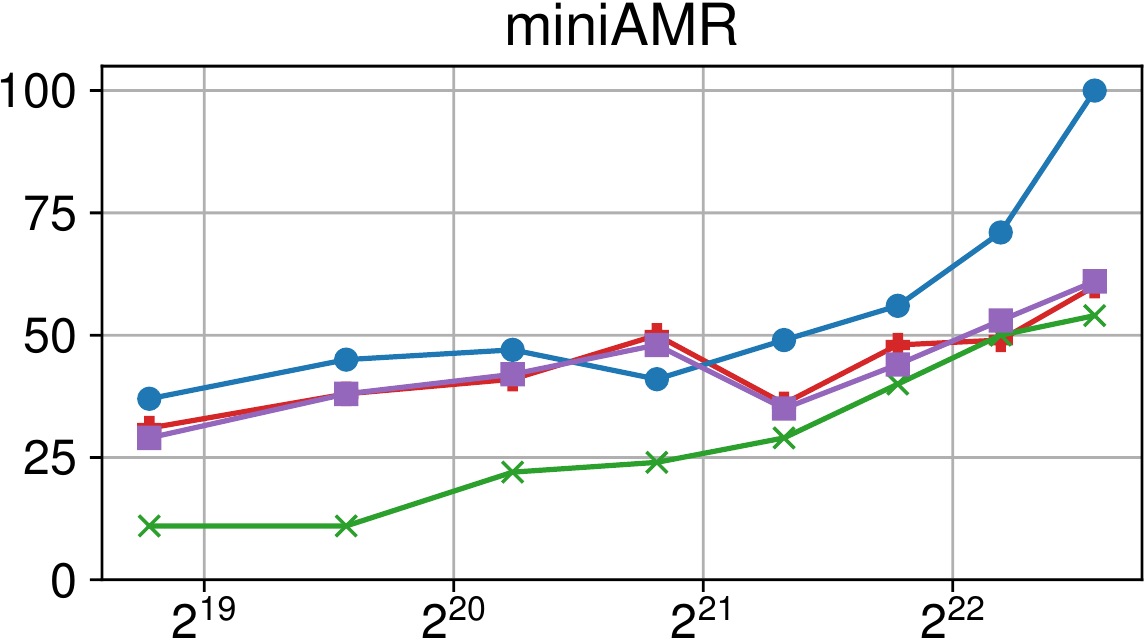}
		\\
		\includegraphics[width=.45\columnwidth]{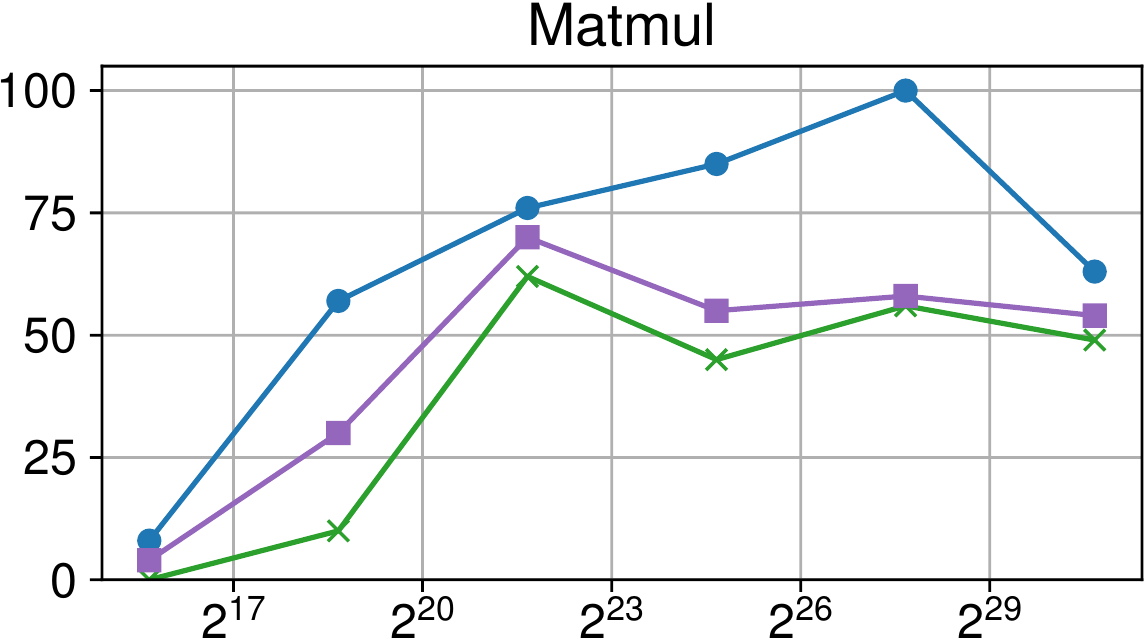}
		\\
		\multicolumn{1}{l}{\footnotesize{granularity}}
		\end{tabular}
	}
	\vspace{-2em}
	\caption{Comparison of performance between the current Nanos6 version and the main OpenMP runtimes on AMD Rome (higher is better)}
	\Description{Comparison of performance between the current Nanos6 version and the main OpenMP runtimes on AMD Rome (higher is better)}
	\label{fig:results-omp-amd}
\end{figure}

\begin{figure}
	\centering
	\subcaptionbox*{}{
		\includegraphics[width=.7\columnwidth]{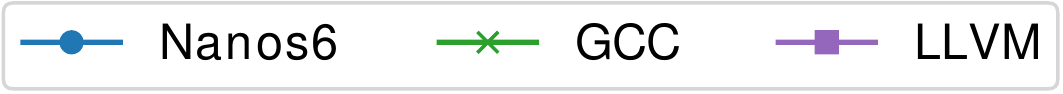}
	}\\[-2.5ex]
	\setlength{\tabcolsep}{0pt}
	\subcaptionbox*{}{
		\begin{tabular}{p{2.5mm} c}
			\multirow{2}{*}{\hspace{-1mm}\rotatebox[origin=lc]{90}{\footnotesize{Efficiency}}} &
		\includegraphics[width=.45\columnwidth]{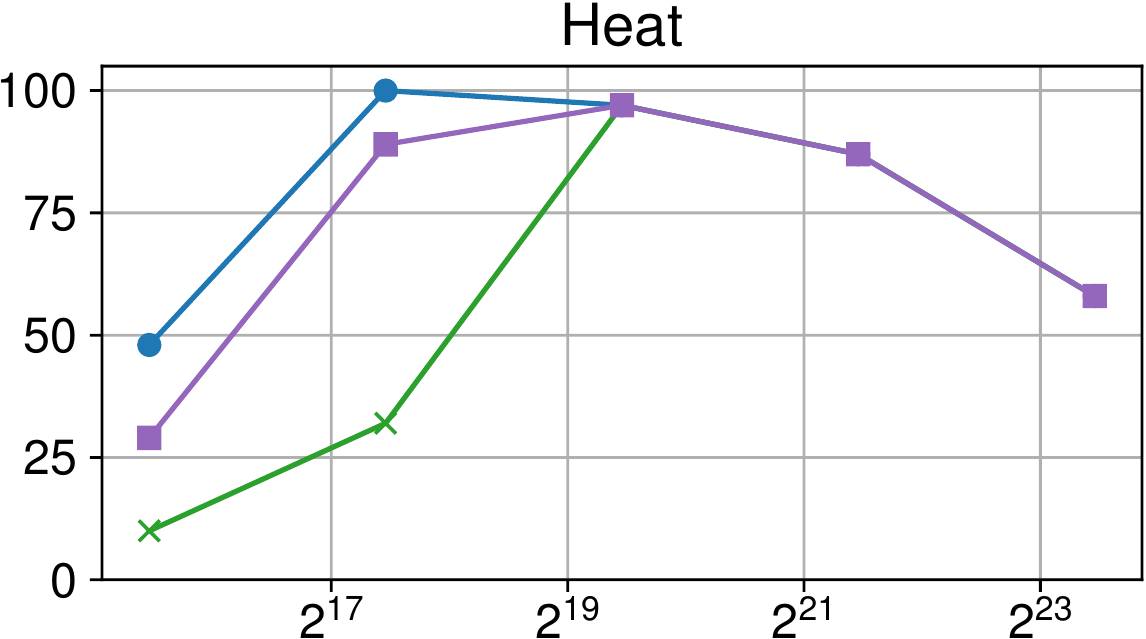}
		\\
		&\includegraphics[width=.45\columnwidth]{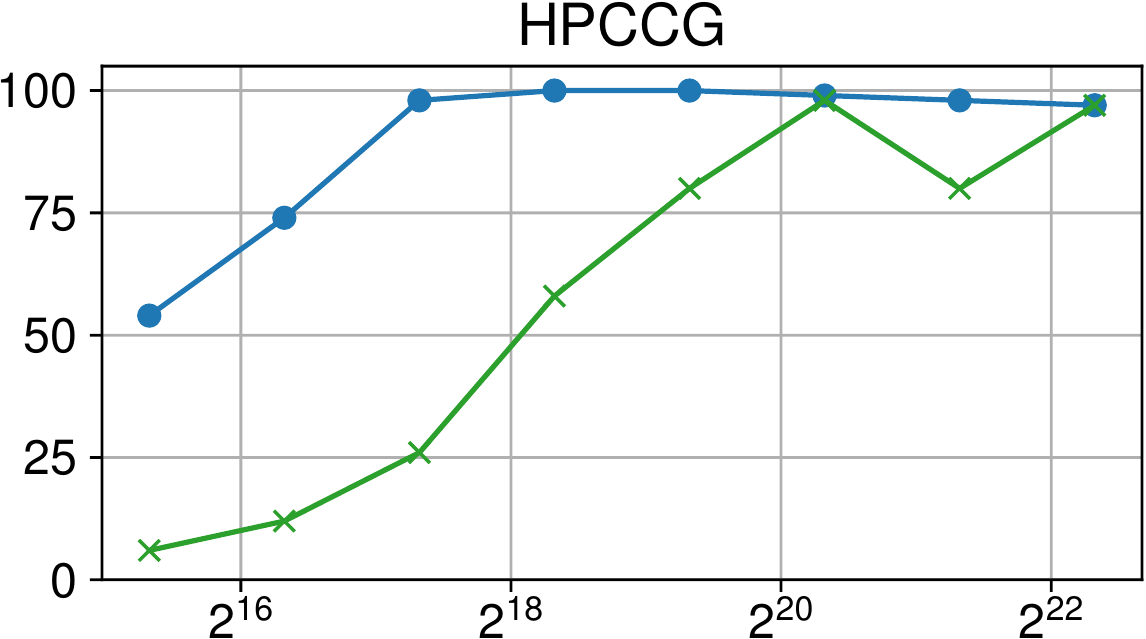}
		\\
		& \multicolumn{1}{r}{\footnotesize{Task}}
		\end{tabular}
	}
	\subcaptionbox*{}{
		\begin{tabular}{c}
		\includegraphics[width=.45\columnwidth]{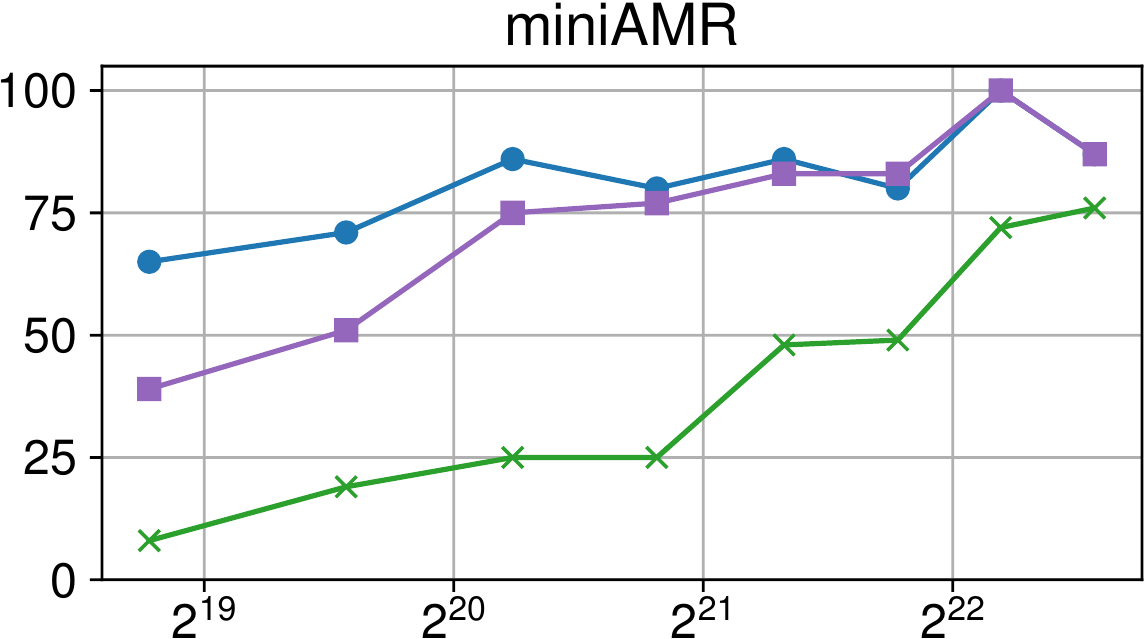}
		\\
		\includegraphics[width=.45\columnwidth]{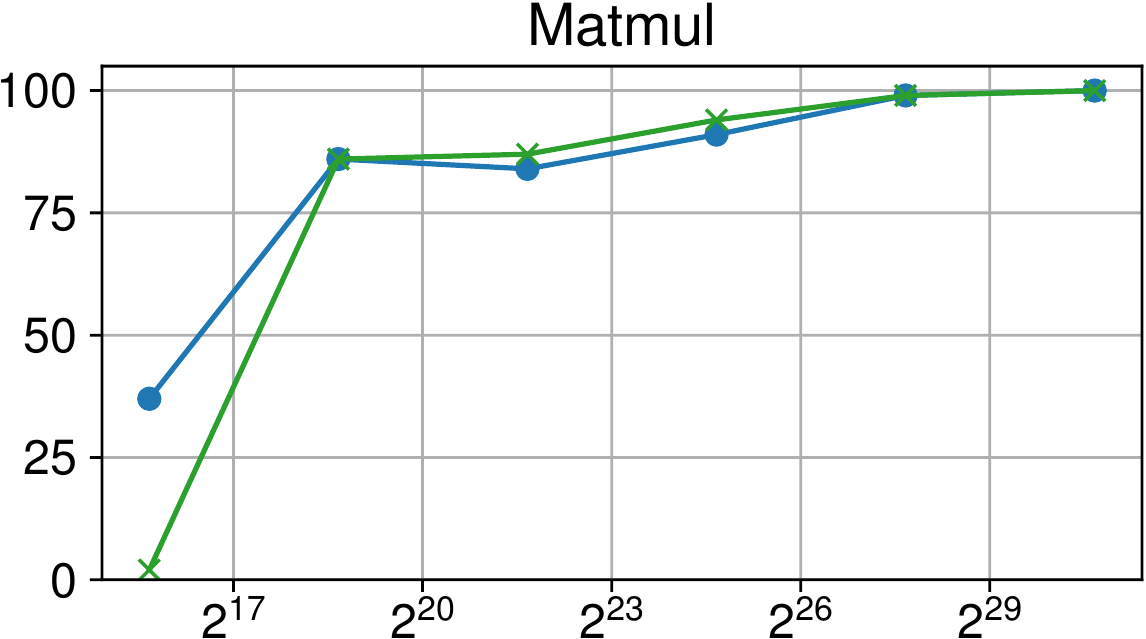}
		\\
		\multicolumn{1}{l}{\footnotesize{granularity}}
		\end{tabular}
	}
	\vspace{-2em}
	\caption{Comparison of performance between the current Nanos6 version and the main OpenMP runtimes on ARM Graviton2 (higher is better)}
	\Description{Comparison of performance between the current Nanos6 version and the main OpenMP runtimes on ARM Graviton2 (higher is better)}
	\label{fig:results-omp-arm}
\end{figure}

\subsection{Detailed traces}
\label{sec:traces}

\begin{figure*}
	\centering
	\includegraphics[width=\textwidth]{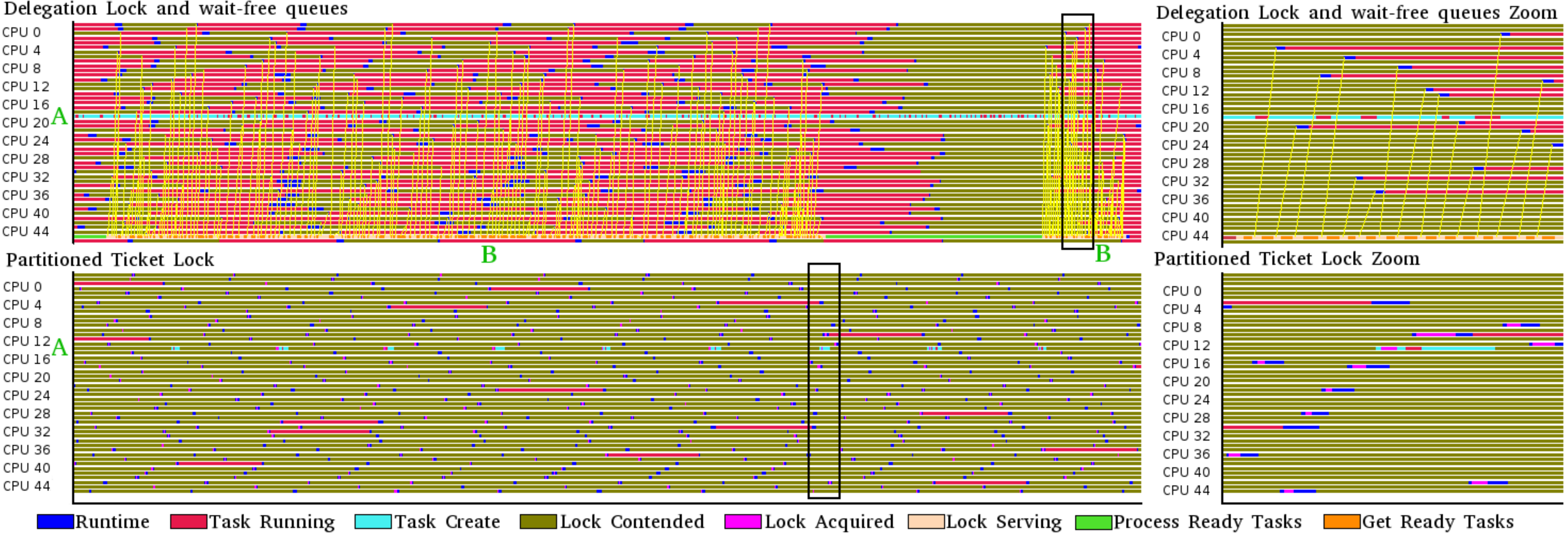}
	\caption{Scheduler lock comparison}
	\Description{Scheduler lock comparison}
	\label{fig:scheduler-trace}
\end{figure*}

Figure \ref{fig:scheduler-trace} shows two miniAMR traces obtained with the new Nanos6 CTF instrumentation backend comparing the Partitioned Ticket Lock (PTlock) and the combination of wait-free queues with the Delegation Ticket Lock (DTLock).
The view displays running tasks (in red), specific runtime subsystems such as task creation (in cyan), or other generic runtime parts (in deep blue) along time (X axis) for a number of cores (Y axis).
The total length is 500us and the zoomed areas (black rectangles) are 10us long, approximately.
Hint {\it A} points to cores running a "task creation" task.
The wait-free version (above) allows created tasks to be queued independently of other cores requesting ready tasks to run.
Instead, in the PTLock version (below), adding and getting a ready task requires obtaining a shared lock which leads the "task creation" task to undergo heavy pressure as other cores requesting a ready task also attempt to acquire the same lock.
Consequently, the number of available ready tasks cannot match the task completion rate and most cores starve (in khaki green).

Hint {\it B} points to DTLock task serving periods.
Yellow arrows depict single tasks being served by the delegation lock owner to waiting cores.
When no more ready tasks left, the lock owner moves ready tasks from the wait-free queues (in green) and continues serving tasks.

\begin{figure}
	\centering
	\includegraphics[width=\columnwidth]{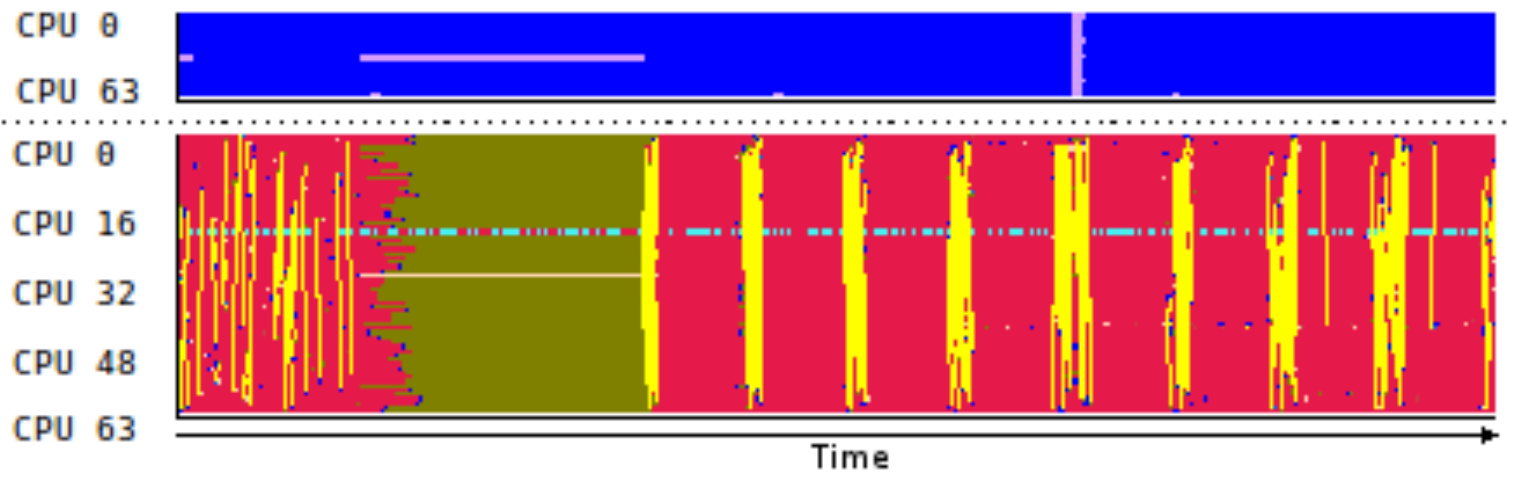}
	\caption{Operating System noise effect on Scheduler}
	\Description{Operating System noise effect on Scheduler}
	\label{fig:scheduler-interrupt}
\end{figure}

Figure \ref{fig:scheduler-interrupt} exemplifies the effect that the operating system noise can incur on the runtime system.
The upper trace displays runtime threads (in deep blue) and hardware interruptions (purple).
The trace below shows a view similar to Figure \ref{fig:scheduler-trace}, where a considerable delay is introduced in the server which causes all cores to stall but the "task creator" core.
Note the yellow lines pattern difference before (irregular) and after (regular) the interrupt.
While the serving thread was stalled in the interrupt, a provision of ready tasks was accumulated.
The surplus of tasks is enough to feed all cores leading to long periods of red (tasks) without yellow lines.
As the extra reserve of tasks lowers, the chances of at least two idle cores requesting a task simultaneously increase, and extra yellow lines appear.
In conclusion, combining OS events with runtime events allows us to complete the whole picture and to identify the source of problems better.

\section{Related work}
\label{sec:relatedwork}

Other dependency system implementations have been described in previous literature, such as the implementation for the OpenUH compiler \cite{PrototypeDependencySupport}.
The GOMP library and LLVM's OpenMP runtime are also available online as Open Source software \cite{GOMPSource}\cite{LLVMSource}.
The topic of overhead in dependency resolution has also been tackled from other angles, such as the TurboBLYSK framework, to create dependency patterns \cite{TurboBLYSK}.

Previous research has also aimed at applying a lock-free approach to dependency resolution, with \cite{DataFlowAnalysis} analyzing several generic dependency resolution schemes and concluding that a ticket-based lockless scheme provided the best performance in their benchmarks.
On the same line, another lock-free dependency system was implemented for the OMPi OpenMP/C Compiler \cite{LocklessListDeps}, which supported OpenMP 4.0 and was based on the same patterns as lock-free lists.
However, our implementation offers a stronger wait-free guarantee.

Regarding task scheduling, there are several studies on alternatives to centralized lock-based scheduling.
Many have studied work-stealing techniques for hierarchical and partitioned schedulers in shared-memory systems \cite{prokopec2013near,guo2010slaw,wang2014adaptive}.
Olivier et al. \cite{olivier2012openmp} proposed a hierarchical scheduler featuring a lock-free ready task queue per socket.
Once the socket's queue is empty, only one of the threads in that queue can try to steal tasks from other sockets, while the others wait.
Similarly, Muddukrishna et al. \cite{muddukrishna2015locality} proposed a lock-based queue per NUMA node, but in this case, workers from the same NUMA can steal at the same time.
In contrast, Vikranth et al. \cite{vikranth2013topology} implemented a task queue per thread, grouped into stealing domains (e.g., one per socket).
Workers always first try to steal from queues in the local domain before trying in the rest.
However, all these approaches can suffer the same bottlenecks as centralized schedulers when running in modern many-core systems.
The ready task queues of consumer threads tend to be empty, so they steal tasks from the creator's queues (usually few), producing contention in those sections.
Also, hierarchical schedulers often have complicated implementations, so developing new scheduling policies becomes an arduous task.

Even though, notice that these hierarchical approaches with work-stealing could use our DTLock (Section \ref{sec:dtlock}) to protect the access to task queues.

The Linux Kernel static tracing infrastructure is used by several backends such as LTTng, ftrace, perf, SystemTap or eBPF.
Yet, only LTTng focuses on efficient user and kernel correlated static tracing \cite{lttng}.
When tracing on the context of runtime systems for HPC, LTTng has two main drawbacks.
On the one hand, tracing the kernel requires installing an out-of-tree Linux Kernel module that usually clashes with operating policies of data centers and supercomputing facilities.
On the other hand, LTTng relies on server daemons to collect and write tracepoints from both user and kernel space.
Such daemons might oversubscribe runtime threads leading to undesired noise.
Therefore, we concluded that our particular case needed an ad-hoc tracing solution.
\section{Conclusion and Future Work}
\label{sec:conclusion}
The proliferation of many-core architectures and workloads with irregular parallelism and load imbalance have shifted the focus
from traditional fork-join parallelism to task-based parallelism.
%The lightweight data-flow synchronization model of tasks and its natural ability to cope with load imbalance are very precious features in this new context.
Nevertheless, task management costs are still an important source of overhead, especially when using fine granularities.
Throughout this paper, we enhance two critical components that bound the ability of runtime systems to manage fine-grained tasks:
the dependency system and the scheduler. We combine both with a state of the art memory allocator to achieve very competitive performance.

We have introduced a novel wait-free approach to implementing dependency management inside a parallel runtime.
We have also defined the \textit{Atomic State Machine} concept and its restrictions and formalized its wait-freedom.
We believe the ASM concept is applicable to similar models and runtimes that use a data-flow execution model.

Additionally, we proposed a novel \textit{Delegation Ticket Lock} that delivers very good performance compared to other
state-of-the-art locks, while keeping the simplicity in the development of scheduling internals and policies.

We also identified the critical contention bottleneck caused by memory management and tackled the problem by leveraging the \textit{jemalloc} state-of-the-art scalable memory allocator.

Finally, we implemented a highly-detailed instrumentation to provide information from both application and kernel level, while introducing
minimal overhead. Such a tool is crucial to identify and analyze bottlenecks in modern runtime systems.

Our evaluation assesses the performance of the different components separately and together, showing important performance improvements compared to
(1) the previous version of the runtime system, and (2) state-of-the-art runtime systems such as Intel OpenMP, GNU GOMP and LLVM OpenMP.

As future work, we plan to investigate extensions of the DTLock interface to support flat combining \cite{10.1145/1810479.1810540}.
This interface will require the ability to access and unblock several waiting threads simultaneously to be able to combine their operations.
% We will also investigate using the DTLock to optimize the management of memory allocations as {\it free} operations can be delayed until a {\it malloc} operation is required.

%% Acknowledgments
\begin{acks}                            %% acks environment is optional
                                        %% contents suppressed with 'anonymous'
  %% Commands \grantsponsor{<sponsorID>}{<name>}{<url>} and
  %% \grantnum[<url>]{<sponsorID>}{<number>} should be used to
  %% acknowledge financial support and will be used by metadata
  %% extraction tools.

  This project is supported by the European Union's Horizon 2020 Research and Innovation programme under grant agreement No.s 754304 (DEEP-EST), by the Spanish Ministry of Science and Innovation (contract PID2019-107255GB and TIN2015-65316P) and by the Generalitat de Catalunya (2017-SGR-1414). We acknowledge PRACE for awarding us access to Joliot-Curie at GENCI@CEA, France.
%   This material is based upon work supported by the
%   \grantsponsor{GS100000001}{National Science
%     Foundation}{http://dx.doi.org/10.13039/100000001} under Grant
%   No.~\grantnum{GS100000001}{nnnnnnn} and Grant
%   No.~\grantnum{GS100000001}{mmmmmmm}.  Any opinions, findings, and
%   conclusions or recommendations expressed in this material are those
%   of the author and do not necessarily reflect the views of the
%   National Science Foundation.
\end{acks}

%% Bibliography
\bibliography{references}

%%% -*-BibTeX-*-
%%% Do NOT edit. File created by BibTeX with style
%%% ACM-Reference-Format-Journals [18-Jan-2012].

\begin{thebibliography}{39}

%%% ====================================================================
%%% NOTE TO THE USER: you can override these defaults by providing
%%% customized versions of any of these macros before the \bibliography
%%% command.  Each of them MUST provide its own final punctuation,
%%% except for \shownote{}, \showDOI{}, and \showURL{}.  The latter two
%%% do not use final punctuation, in order to avoid confusing it with
%%% the Web address.
%%%
%%% To suppress output of a particular field, define its macro to expand
%%% to an empty string, or better, \unskip, like this:
%%%
%%% \newcommand{\showDOI}[1]{\unskip}   % LaTeX syntax
%%%
%%% \def \showDOI #1{\unskip}           % plain TeX syntax
%%%
%%% ====================================================================

\ifx \showCODEN    \undefined \def \showCODEN     #1{\unskip}     \fi
\ifx \showDOI      \undefined \def \showDOI       #1{#1}\fi
\ifx \showISBNx    \undefined \def \showISBNx     #1{\unskip}     \fi
\ifx \showISBNxiii \undefined \def \showISBNxiii  #1{\unskip}     \fi
\ifx \showISSN     \undefined \def \showISSN      #1{\unskip}     \fi
\ifx \showLCCN     \undefined \def \showLCCN      #1{\unskip}     \fi
\ifx \shownote     \undefined \def \shownote      #1{#1}          \fi
\ifx \showarticletitle \undefined \def \showarticletitle #1{#1}   \fi
\ifx \showURL      \undefined \def \showURL       {\relax}        \fi
% The following commands are used for tagged output and should be
% invisible to TeX
\providecommand\bibfield[2]{#2}
\providecommand\bibinfo[2]{#2}
\providecommand\natexlab[1]{#1}
\providecommand\showeprint[2][]{arXiv:#2}

\bibitem[\protect\citeauthoryear{{Anastasios Souris}}{{Anastasios
  Souris}}{2015}]%
        {LocklessListDeps}
\bibfield{author}{\bibinfo{person}{{Anastasios Souris}}.}
  \bibinfo{year}{2015}\natexlab{}.
\newblock \bibinfo{title}{Design and implementation of the OpenMP 4.0 task
  dataflow model for cache-coherent shared-memory parallel systems in the
  runtime of the OMPi OpenMP/C compiler}.
\newblock
\newblock
\urldef\tempurl%
\url{https://doi.org/10.26233/HEALLINK.TUC.30331}
\showDOI{\tempurl}


\bibitem[\protect\citeauthoryear{Berger, McKinley, Blumofe, and Wilson}{Berger
  et~al\mbox{.}}{2000}]%
        {hoard}
\bibfield{author}{\bibinfo{person}{Emery~D. Berger},
  \bibinfo{person}{Kathryn~S. McKinley}, \bibinfo{person}{Robert~D. Blumofe},
  {and} \bibinfo{person}{Paul~R. Wilson}.} \bibinfo{year}{2000}\natexlab{}.
\newblock \showarticletitle{Hoard: A Scalable Memory Allocator for
  Multithreaded Applications}. In \bibinfo{booktitle}{\emph{Proceedings of the
  Ninth International Conference on Architectural Support for Programming
  Languages and Operating Systems}} (Cambridge, Massachusetts, USA)
  \emph{(\bibinfo{series}{ASPLOS IX})}. \bibinfo{publisher}{Association for
  Computing Machinery}, \bibinfo{address}{New York, NY, USA},
  \bibinfo{pages}{117–128}.
\newblock
\showISBNx{1581133170}
\urldef\tempurl%
\url{https://doi.org/10.1145/378993.379232}
\showDOI{\tempurl}


\bibitem[\protect\citeauthoryear{Borkar and Chien}{Borkar and Chien}{2011}]%
        {BorkarFuture}
\bibfield{author}{\bibinfo{person}{Shekhar Borkar} {and}
  \bibinfo{person}{Andrew~A. Chien}.} \bibinfo{year}{2011}\natexlab{}.
\newblock \showarticletitle{The Future of Microprocessors}.
\newblock \bibinfo{journal}{\emph{Commun. ACM}} \bibinfo{volume}{54},
  \bibinfo{number}{5} (\bibinfo{date}{May} \bibinfo{year}{2011}),
  \bibinfo{pages}{67–77}.
\newblock
\showISSN{0001-0782}
\urldef\tempurl%
\url{https://doi.org/10.1145/1941487.1941507}
\showDOI{\tempurl}


\bibitem[\protect\citeauthoryear{BSC}{BSC}{2020a}]%
        {bsc2019nanos6}
\bibfield{author}{\bibinfo{person}{BSC}.} \bibinfo{year}{2020}\natexlab{a}.
\newblock \bibinfo{title}{Nanos6 GitHub}.
\newblock
\newblock
\urldef\tempurl%
\url{https://github.com/bsc-pm/nanos6}
\showURL{%
\tempurl}


\bibitem[\protect\citeauthoryear{BSC}{BSC}{2020b}]%
        {bsc2020ompss2}
\bibfield{author}{\bibinfo{person}{BSC}.} \bibinfo{year}{2020}\natexlab{b}.
\newblock \bibinfo{title}{{OmpSs-2} Specification}.
\newblock
\newblock
\urldef\tempurl%
\url{https://pm.bsc.es/ftp/ompss-2/doc/spec/OmpSs-2-Specification.pdf}
\showURL{%
\tempurl}


\bibitem[\protect\citeauthoryear{{Dagum} and {Menon}}{{Dagum} and
  {Menon}}{1998}]%
        {openmpindustry}
\bibfield{author}{\bibinfo{person}{L. {Dagum}} {and} \bibinfo{person}{R.
  {Menon}}.} \bibinfo{year}{1998}\natexlab{}.
\newblock \showarticletitle{{OpenMP}: an industry standard API for
  shared-memory programming}.
\newblock \bibinfo{journal}{\emph{IEEE Computational Science and Engineering}}
  \bibinfo{volume}{5}, \bibinfo{number}{1} (\bibinfo{date}{Jan}
  \bibinfo{year}{1998}), \bibinfo{pages}{46--55}.
\newblock
\showISSN{1558-190X}
\urldef\tempurl%
\url{https://doi.org/10.1109/99.660313}
\showDOI{\tempurl}


\bibitem[\protect\citeauthoryear{Desnoyers}{Desnoyers}{2020}]%
        {ctf}
\bibfield{author}{\bibinfo{person}{Mathieu Desnoyers}.}
  \bibinfo{year}{2020}\natexlab{}.
\newblock \bibinfo{title}{The Common Trace Format}.
\newblock
\newblock
\urldef\tempurl%
\url{https://diamon.org/ctf/v1.8.3}
\showURL{%
\tempurl}


\bibitem[\protect\citeauthoryear{Dice}{Dice}{2011}]%
        {10.1145/1989493.1989543}
\bibfield{author}{\bibinfo{person}{David Dice}.}
  \bibinfo{year}{2011}\natexlab{}.
\newblock \showarticletitle{Brief Announcement: A Partitioned Ticket Lock}. In
  \bibinfo{booktitle}{\emph{Proceedings of the Twenty-Third Annual ACM
  Symposium on Parallelism in Algorithms and Architectures}} (San Jose,
  California, USA) \emph{(\bibinfo{series}{SPAA ’11})}.
  \bibinfo{publisher}{Association for Computing Machinery},
  \bibinfo{address}{New York, NY, USA}, \bibinfo{pages}{309–310}.
\newblock
\showISBNx{9781450307437}
\urldef\tempurl%
\url{https://doi.org/10.1145/1989493.1989543}
\showDOI{\tempurl}


\bibitem[\protect\citeauthoryear{Dice and Kogan}{Dice and Kogan}{2019}]%
        {TWA}
\bibfield{author}{\bibinfo{person}{Dave Dice} {and} \bibinfo{person}{Alex
  Kogan}.} \bibinfo{year}{2019}\natexlab{}.
\newblock \showarticletitle{TWA -- Ticket Locks Augmented with a Waiting
  Array}. In \bibinfo{booktitle}{\emph{Euro-Par 2019: Parallel Processing}},
  \bibfield{editor}{\bibinfo{person}{Ramin Yahyapour}} (Ed.).
  \bibinfo{publisher}{Springer International Publishing},
  \bibinfo{address}{Cham}, \bibinfo{pages}{334--345}.
\newblock
\showISBNx{978-3-030-29400-7}


\bibitem[\protect\citeauthoryear{Duran, Perez, Ayguad{\'e}, Badia, and
  Labarta}{Duran et~al\mbox{.}}{2008}]%
        {taskBased2}
\bibfield{author}{\bibinfo{person}{Alejandro Duran}, \bibinfo{person}{Josep~M.
  Perez}, \bibinfo{person}{Eduard Ayguad{\'e}}, \bibinfo{person}{Rosa~M.
  Badia}, {and} \bibinfo{person}{Jesus Labarta}.}
  \bibinfo{year}{2008}\natexlab{}.
\newblock \showarticletitle{Extending the OpenMP Tasking Model to Allow
  Dependent Tasks}. In \bibinfo{booktitle}{\emph{OpenMP in a New Era of
  Parallelism}}, \bibfield{editor}{\bibinfo{person}{Rudolf Eigenmann} {and}
  \bibinfo{person}{Bronis~R. de~Supinski}} (Eds.). \bibinfo{publisher}{Springer
  Berlin Heidelberg}, \bibinfo{address}{Berlin, Heidelberg},
  \bibinfo{pages}{111--122}.
\newblock
\showISBNx{978-3-540-79561-2}


\bibitem[\protect\citeauthoryear{{Esmaeilzadeh}, {Blem}, {Amant},
  {Sankaralingam}, and {Burger}}{{Esmaeilzadeh} et~al\mbox{.}}{2011}]%
        {DarkSiliconMulticore}
\bibfield{author}{\bibinfo{person}{H. {Esmaeilzadeh}}, \bibinfo{person}{E.
  {Blem}}, \bibinfo{person}{R.~S. {Amant}}, \bibinfo{person}{K.
  {Sankaralingam}}, {and} \bibinfo{person}{D. {Burger}}.}
  \bibinfo{year}{2011}\natexlab{}.
\newblock \showarticletitle{Dark silicon and the end of multicore scaling}. In
  \bibinfo{booktitle}{\emph{2011 38th Annual International Symposium on
  Computer Architecture (ISCA)}}. \bibinfo{pages}{365--376}.
\newblock


\bibitem[\protect\citeauthoryear{Evans}{Evans}{2020}]%
        {jemalloc}
\bibfield{author}{\bibinfo{person}{Jason Evans}.}
  \bibinfo{year}{2020}\natexlab{}.
\newblock \bibinfo{title}{jemalloc}.
\newblock
\newblock
\urldef\tempurl%
\url{http://jemalloc.net/}
\showURL{%
\tempurl}


\bibitem[\protect\citeauthoryear{{Ferran Pallarès Roca}}{{Ferran Pallarès
  Roca}}{2017}]%
        {reductions-ferran}
\bibfield{author}{\bibinfo{person}{{Ferran Pallarès Roca}}.}
  \bibinfo{year}{2017}\natexlab{}.
\newblock \emph{\bibinfo{title}{Extending OmpSs programming model with task
  reductions: A compiler and runtime approach}}.
\newblock Bachelor's Thesis. \bibinfo{school}{Barcelona School of Informatics,
  Universitat Politècnica de Catalunya}.
\newblock


\bibitem[\protect\citeauthoryear{Fournier, Desnoyers, and Dagenais}{Fournier
  et~al\mbox{.}}{2009}]%
        {lttng}
\bibfield{author}{\bibinfo{person}{Pierre-Marc Fournier},
  \bibinfo{person}{Mathieu Desnoyers}, {and} \bibinfo{person}{Michel~R
  Dagenais}.} \bibinfo{year}{2009}\natexlab{}.
\newblock \showarticletitle{Combined tracing of the kernel and applications
  with LTTng}. In \bibinfo{booktitle}{\emph{Proceedings of the 2009 linux
  symposium}}. Citeseer, \bibinfo{pages}{87--93}.
\newblock


\bibitem[\protect\citeauthoryear{Gautier, Perez, and Richard}{Gautier
  et~al\mbox{.}}{2018}]%
        {OpenMPGranularities}
\bibfield{author}{\bibinfo{person}{Thierry Gautier}, \bibinfo{person}{Christian
  Perez}, {and} \bibinfo{person}{J{\'e}r{\^o}me Richard}.}
  \bibinfo{year}{2018}\natexlab{}.
\newblock \showarticletitle{On the Impact of OpenMP Task Granularity}. In
  \bibinfo{booktitle}{\emph{Evolving OpenMP for Evolving Architectures}},
  \bibfield{editor}{\bibinfo{person}{Bronis~R. de~Supinski},
  \bibinfo{person}{Pedro Valero-Lara}, \bibinfo{person}{Xavier Martorell},
  \bibinfo{person}{Sergi Mateo~Bellido}, {and} \bibinfo{person}{Jesus Labarta}}
  (Eds.). \bibinfo{publisher}{Springer International Publishing},
  \bibinfo{address}{Cham}, \bibinfo{pages}{205--221}.
\newblock
\showISBNx{978-3-319-98521-3}


\bibitem[\protect\citeauthoryear{Ghosh, Yan, Eachempati, and Chapman}{Ghosh
  et~al\mbox{.}}{2013}]%
        {PrototypeDependencySupport}
\bibfield{author}{\bibinfo{person}{Priyanka Ghosh}, \bibinfo{person}{Yonghong
  Yan}, \bibinfo{person}{Deepak Eachempati}, {and} \bibinfo{person}{Barbara
  Chapman}.} \bibinfo{year}{2013}\natexlab{}.
\newblock \showarticletitle{A Prototype Implementation of OpenMP Task
  Dependency Support}. In \bibinfo{booktitle}{\emph{OpenMP in the Era of Low
  Power Devices and Accelerators}},
  \bibfield{editor}{\bibinfo{person}{Alistair~P. Rendell},
  \bibinfo{person}{Barbara~M. Chapman}, {and} \bibinfo{person}{Matthias~S.
  M{\"u}ller}} (Eds.). \bibinfo{publisher}{Springer Berlin Heidelberg},
  \bibinfo{address}{Berlin, Heidelberg}, \bibinfo{pages}{128--140}.
\newblock
\showISBNx{978-3-642-40698-0}


\bibitem[\protect\citeauthoryear{{GNU Project}}{{GNU Project}}{2020}]%
        {GOMPSource}
\bibfield{author}{\bibinfo{person}{{GNU Project}}.}
  \bibinfo{year}{2020}\natexlab{}.
\newblock \bibinfo{title}{{GOMP Source Code}}.
\newblock
\newblock
\urldef\tempurl%
\url{https://github.com/gcc-mirror/gcc/tree/master/libgomp}
\showURL{%
\tempurl}
\newblock
\shownote{Accessed: 2020-02-01.}


\bibitem[\protect\citeauthoryear{Guo, Zhao, Cave, and Sarkar}{Guo
  et~al\mbox{.}}{2010}]%
        {guo2010slaw}
\bibfield{author}{\bibinfo{person}{Yi Guo}, \bibinfo{person}{Jisheng Zhao},
  \bibinfo{person}{Vincent Cave}, {and} \bibinfo{person}{Vivek Sarkar}.}
  \bibinfo{year}{2010}\natexlab{}.
\newblock \showarticletitle{SLAW: A scalable locality-aware adaptive
  work-stealing scheduler}. In \bibinfo{booktitle}{\emph{2010 IEEE
  International Symposium on Parallel \& Distributed Processing (IPDPS)}}.
  IEEE, \bibinfo{pages}{1--12}.
\newblock


\bibitem[\protect\citeauthoryear{Hendler, Incze, Shavit, and Tzafrir}{Hendler
  et~al\mbox{.}}{2010}]%
        {10.1145/1810479.1810540}
\bibfield{author}{\bibinfo{person}{Danny Hendler}, \bibinfo{person}{Itai
  Incze}, \bibinfo{person}{Nir Shavit}, {and} \bibinfo{person}{Moran Tzafrir}.}
  \bibinfo{year}{2010}\natexlab{}.
\newblock \showarticletitle{Flat Combining and the Synchronization-Parallelism
  Tradeoff}. In \bibinfo{booktitle}{\emph{Proceedings of the Twenty-Second
  Annual ACM Symposium on Parallelism in Algorithms and Architectures}} (Thira,
  Santorini, Greece) \emph{(\bibinfo{series}{SPAA ’10})}.
  \bibinfo{publisher}{Association for Computing Machinery},
  \bibinfo{address}{New York, NY, USA}, \bibinfo{pages}{355–364}.
\newblock
\showISBNx{9781450300797}
\urldef\tempurl%
\url{https://doi.org/10.1145/1810479.1810540}
\showDOI{\tempurl}


\bibitem[\protect\citeauthoryear{Karlin, Keasler, and Neely}{Karlin
  et~al\mbox{.}}{2013}]%
        {lulesh}
\bibfield{author}{\bibinfo{person}{Ian Karlin}, \bibinfo{person}{Jeff Keasler},
  {and} \bibinfo{person}{Rob Neely}.} \bibinfo{year}{2013}\natexlab{}.
\newblock \bibinfo{booktitle}{\emph{LULESH 2.0 Updates and Changes}}.
\newblock \bibinfo{type}{{T}echnical {R}eport} LLNL-TR-641973.
  \bibinfo{pages}{1--9} pages.
\newblock


\bibitem[\protect\citeauthoryear{{Klaftenegger}, {Sagonas}, and
  {Winblad}}{{Klaftenegger} et~al\mbox{.}}{2018}]%
        {queuedelegationlock}
\bibfield{author}{\bibinfo{person}{D. {Klaftenegger}}, \bibinfo{person}{K.
  {Sagonas}}, {and} \bibinfo{person}{K. {Winblad}}.}
  \bibinfo{year}{2018}\natexlab{}.
\newblock \showarticletitle{Queue Delegation Locking}.
\newblock \bibinfo{journal}{\emph{IEEE Transactions on Parallel and Distributed
  Systems}} \bibinfo{volume}{29}, \bibinfo{number}{3} (\bibinfo{year}{2018}),
  \bibinfo{pages}{687--704}.
\newblock
\urldef\tempurl%
\url{https://doi.org/10.1109/TPDS.2017.2767046}
\showDOI{\tempurl}


\bibitem[\protect\citeauthoryear{Kurzak, Ltaief, Dongarra, and Badia}{Kurzak
  et~al\mbox{.}}{2010}]%
        {taskBased1}
\bibfield{author}{\bibinfo{person}{Jakub Kurzak}, \bibinfo{person}{Hatem
  Ltaief}, \bibinfo{person}{Jack Dongarra}, {and} \bibinfo{person}{Rosa~M.
  Badia}.} \bibinfo{year}{2010}\natexlab{}.
\newblock \showarticletitle{Scheduling dense linear algebra operations on
  multicore processors}.
\newblock \bibinfo{journal}{\emph{Concurrency and Computation: Practice and
  Experience}} \bibinfo{volume}{22}, \bibinfo{number}{1}
  (\bibinfo{year}{2010}), \bibinfo{pages}{15--44}.
\newblock
\urldef\tempurl%
\url{https://doi.org/10.1002/cpe.1467}
\showDOI{\tempurl}
\showeprint{https://onlinelibrary.wiley.com/doi/pdf/10.1002/cpe.1467}


\bibitem[\protect\citeauthoryear{{LLVM Project}}{{LLVM Project}}{2020}]%
        {LLVMSource}
\bibfield{author}{\bibinfo{person}{{LLVM Project}}.}
  \bibinfo{year}{2020}\natexlab{}.
\newblock \bibinfo{title}{{LLVM OpenMP Library Source}}.
\newblock
\newblock
\urldef\tempurl%
\url{https://github.com/llvm/llvm-project/tree/master/openmp}
\showURL{%
\tempurl}


\bibitem[\protect\citeauthoryear{{Maroñas}, {Sala}, {Mateo}, {Ayguadé}, and
  {Beltran}}{{Maroñas} et~al\mbox{.}}{2019}]%
        {WorksharingMarcos}
\bibfield{author}{\bibinfo{person}{M. {Maroñas}}, \bibinfo{person}{K. {Sala}},
  \bibinfo{person}{S. {Mateo}}, \bibinfo{person}{E. {Ayguadé}}, {and}
  \bibinfo{person}{V. {Beltran}}.} \bibinfo{year}{2019}\natexlab{}.
\newblock \showarticletitle{Worksharing Tasks: An Efficient Way to Exploit
  Irregular and Fine-Grained Loop Parallelism}. In
  \bibinfo{booktitle}{\emph{2019 IEEE 26th International Conference on High
  Performance Computing, Data, and Analytics (HiPC)}}.
  \bibinfo{pages}{383--394}.
\newblock
\showISSN{1094-7256}
\urldef\tempurl%
\url{https://doi.org/10.1109/HiPC.2019.00053}
\showDOI{\tempurl}


\bibitem[\protect\citeauthoryear{Mellor-Crummey and Scott}{Mellor-Crummey and
  Scott}{1991}]%
        {10.1145/103727.103729}
\bibfield{author}{\bibinfo{person}{John~M. Mellor-Crummey} {and}
  \bibinfo{person}{Michael~L. Scott}.} \bibinfo{year}{1991}\natexlab{}.
\newblock \showarticletitle{Algorithms for Scalable Synchronization on
  Shared-Memory Multiprocessors}.
\newblock \bibinfo{journal}{\emph{ACM Trans. Comput. Syst.}}
  \bibinfo{volume}{9}, \bibinfo{number}{1} (\bibinfo{date}{Feb.}
  \bibinfo{year}{1991}), \bibinfo{pages}{21–65}.
\newblock
\showISSN{0734-2071}
\urldef\tempurl%
\url{https://doi.org/10.1145/103727.103729}
\showDOI{\tempurl}


\bibitem[\protect\citeauthoryear{Muddukrishna, Jonsson, and
  Brorsson}{Muddukrishna et~al\mbox{.}}{2015}]%
        {muddukrishna2015locality}
\bibfield{author}{\bibinfo{person}{Ananya Muddukrishna},
  \bibinfo{person}{Peter~A Jonsson}, {and} \bibinfo{person}{Mats Brorsson}.}
  \bibinfo{year}{2015}\natexlab{}.
\newblock \showarticletitle{Locality-aware task scheduling and data
  distribution for OpenMP programs on NUMA systems and manycore processors}.
\newblock \bibinfo{journal}{\emph{Scientific Programming}}
  \bibinfo{volume}{2015} (\bibinfo{year}{2015}).
\newblock


\bibitem[\protect\citeauthoryear{Olivier, Porterfield, Wheeler, Spiegel, and
  Prins}{Olivier et~al\mbox{.}}{2012}]%
        {olivier2012openmp}
\bibfield{author}{\bibinfo{person}{Stephen~L Olivier}, \bibinfo{person}{Allan~K
  Porterfield}, \bibinfo{person}{Kyle~B Wheeler}, \bibinfo{person}{Michael
  Spiegel}, {and} \bibinfo{person}{Jan~F Prins}.}
  \bibinfo{year}{2012}\natexlab{}.
\newblock \showarticletitle{OpenMP task scheduling strategies for multicore
  NUMA systems}.
\newblock \bibinfo{journal}{\emph{The International Journal of High Performance
  Computing Applications}} \bibinfo{volume}{26}, \bibinfo{number}{2}
  (\bibinfo{year}{2012}), \bibinfo{pages}{110--124}.
\newblock


\bibitem[\protect\citeauthoryear{{Perez}, {Beltran}, {Labarta}, and
  {Ayguadé}}{{Perez} et~al\mbox{.}}{2017}]%
        {JMWeaks}
\bibfield{author}{\bibinfo{person}{J.~M. {Perez}}, \bibinfo{person}{V.
  {Beltran}}, \bibinfo{person}{J. {Labarta}}, {and} \bibinfo{person}{E.
  {Ayguadé}}.} \bibinfo{year}{2017}\natexlab{}.
\newblock \showarticletitle{Improving the Integration of Task Nesting and
  Dependencies in OpenMP}. In \bibinfo{booktitle}{\emph{2017 IEEE International
  Parallel and Distributed Processing Symposium (IPDPS)}}.
  \bibinfo{pages}{809--818}.
\newblock
\urldef\tempurl%
\url{https://doi.org/10.1109/IPDPS.2017.69}
\showDOI{\tempurl}


\bibitem[\protect\citeauthoryear{Podobas, Brorsson, and Vlassov}{Podobas
  et~al\mbox{.}}{2014}]%
        {TurboBLYSK}
\bibfield{author}{\bibinfo{person}{Artur Podobas}, \bibinfo{person}{Mats
  Brorsson}, {and} \bibinfo{person}{Vladimir Vlassov}.}
  \bibinfo{year}{2014}\natexlab{}.
\newblock \showarticletitle{TurboB{\L}YSK: Scheduling for Improved Data-Driven
  Task Performance with Fast Dependency Resolution}. In
  \bibinfo{booktitle}{\emph{Using and Improving OpenMP for Devices, Tasks, and
  More}}, \bibfield{editor}{\bibinfo{person}{Luiz DeRose},
  \bibinfo{person}{Bronis~R. de~Supinski}, \bibinfo{person}{Stephen~L.
  Olivier}, \bibinfo{person}{Barbara~M. Chapman}, {and}
  \bibinfo{person}{Matthias~S. M{\"u}ller}} (Eds.).
  \bibinfo{publisher}{Springer International Publishing},
  \bibinfo{address}{Cham}, \bibinfo{pages}{45--57}.
\newblock
\showISBNx{978-3-319-11454-5}


\bibitem[\protect\citeauthoryear{Prokopec and Odersky}{Prokopec and
  Odersky}{2013}]%
        {prokopec2013near}
\bibfield{author}{\bibinfo{person}{Aleksandar Prokopec} {and}
  \bibinfo{person}{Martin Odersky}.} \bibinfo{year}{2013}\natexlab{}.
\newblock \showarticletitle{Near optimal work-stealing tree scheduler for
  highly irregular data-parallel workloads}. In
  \bibinfo{booktitle}{\emph{International Workshop on Languages and Compilers
  for Parallel Computing}}. Springer, \bibinfo{pages}{55--86}.
\newblock


\bibitem[\protect\citeauthoryear{Reed and Kanodia}{Reed and Kanodia}{1979}]%
        {10.1145/359060.359076}
\bibfield{author}{\bibinfo{person}{David~P. Reed} {and}
  \bibinfo{person}{Rajendra~K. Kanodia}.} \bibinfo{year}{1979}\natexlab{}.
\newblock \showarticletitle{Synchronization with Eventcounts and Sequencers}.
\newblock \bibinfo{journal}{\emph{Commun. ACM}} \bibinfo{volume}{22},
  \bibinfo{number}{2} (\bibinfo{date}{Feb.} \bibinfo{year}{1979}),
  \bibinfo{pages}{115–123}.
\newblock
\showISSN{0001-0782}
\urldef\tempurl%
\url{https://doi.org/10.1145/359060.359076}
\showDOI{\tempurl}


\bibitem[\protect\citeauthoryear{Roghanchi, Eriksson, and Basu}{Roghanchi
  et~al\mbox{.}}{2017}]%
        {10.1145/3132747.3132771}
\bibfield{author}{\bibinfo{person}{Sepideh Roghanchi}, \bibinfo{person}{Jakob
  Eriksson}, {and} \bibinfo{person}{Nilanjana Basu}.}
  \bibinfo{year}{2017}\natexlab{}.
\newblock \showarticletitle{Ffwd: Delegation is (Much) Faster than You Think}.
  In \bibinfo{booktitle}{\emph{Proceedings of the 26th Symposium on Operating
  Systems Principles}} (Shanghai, China) \emph{(\bibinfo{series}{SOSP ’17})}.
  \bibinfo{publisher}{Association for Computing Machinery},
  \bibinfo{address}{New York, NY, USA}, \bibinfo{pages}{342–358}.
\newblock
\showISBNx{9781450350853}
\urldef\tempurl%
\url{https://doi.org/10.1145/3132747.3132771}
\showDOI{\tempurl}


\bibitem[\protect\citeauthoryear{Sala, Rico, and Beltran}{Sala
  et~al\mbox{.}}{2020}]%
        {sala2020towards}
\bibfield{author}{\bibinfo{person}{Kevin Sala}, \bibinfo{person}{Alejandro
  Rico}, {and} \bibinfo{person}{Vicen{\c{c}} Beltran}.}
  \bibinfo{year}{2020}\natexlab{}.
\newblock \showarticletitle{Towards Data-Flow Parallelization for Adaptive Mesh
  Refinement Applications}. In \bibinfo{booktitle}{\emph{2020 IEEE
  International Conference on Cluster Computing (CLUSTER)}}. IEEE,
  \bibinfo{pages}{314--325}.
\newblock


\bibitem[\protect\citeauthoryear{Sasidharan and Snir}{Sasidharan and
  Snir}{2016}]%
        {miniAMR}
\bibfield{author}{\bibinfo{person}{Aparna Sasidharan} {and}
  \bibinfo{person}{Marc Snir}.} \bibinfo{year}{2016}\natexlab{}.
\newblock \bibinfo{booktitle}{\emph{MiniAMR - A miniapp for Adaptive Mesh
  Refinement}}.
\newblock \bibinfo{type}{{T}echnical {R}eport}.
  \bibinfo{institution}{University of Illinois}. \bibinfo{pages}{1--21} pages.
\newblock
\urldef\tempurl%
\url{http://hdl.handle.net/2142/91046}
\showURL{%
\tempurl}


\bibitem[\protect\citeauthoryear{Slaughter, Wu, Fu, Brandenburg, Garcia, Kautz,
  Marx, S.~Morris, Cao, Bosilca, Mirchandaney, Lee, Teichler, McCormick, and
  Aiken}{Slaughter et~al\mbox{.}}{2020}]%
        {taskbench}
\bibfield{author}{\bibinfo{person}{Elliott Slaughter}, \bibinfo{person}{Wei
  Wu}, \bibinfo{person}{Yuankun Fu}, \bibinfo{person}{Legends Brandenburg},
  \bibinfo{person}{Nicolai Garcia}, \bibinfo{person}{Wilhem Kautz},
  \bibinfo{person}{Emily Marx}, \bibinfo{person}{Kaleb S.~Morris},
  \bibinfo{person}{Qinglei Cao}, \bibinfo{person}{George Bosilca},
  \bibinfo{person}{Seema Mirchandaney}, \bibinfo{person}{Wonchan Lee},
  \bibinfo{person}{Sean Teichler}, \bibinfo{person}{Patrick McCormick}, {and}
  \bibinfo{person}{Alex Aiken}.} \bibinfo{year}{2020}\natexlab{}.
\newblock \showarticletitle{Task Bench: A Parameterized Benchmark for
  Evaluating Parallel Runtime Performance}. In
  \bibinfo{booktitle}{\emph{Proceedings of the International Conference for
  High Performance Computing, Networking, Storage and Analysis}} (Atlanta,
  Georgia) \emph{(\bibinfo{series}{SC ’20})}. \bibinfo{publisher}{Association
  for Computing Machinery}.
\newblock


\bibitem[\protect\citeauthoryear{{Theis} and {Wong}}{{Theis} and
  {Wong}}{2017}]%
        {TheisMoore}
\bibfield{author}{\bibinfo{person}{T.~N. {Theis}} {and}
  \bibinfo{person}{H.~.~P. {Wong}}.} \bibinfo{year}{2017}\natexlab{}.
\newblock \showarticletitle{The End of Moore's Law: A New Beginning for
  Information Technology}.
\newblock \bibinfo{journal}{\emph{Computing in Science Engineering}}
  \bibinfo{volume}{19}, \bibinfo{number}{2} (\bibinfo{year}{2017}),
  \bibinfo{pages}{41--50}.
\newblock


\bibitem[\protect\citeauthoryear{Vandierendonck, Tzenakis, and
  Nikolopoulos}{Vandierendonck et~al\mbox{.}}{2013}]%
        {DataFlowAnalysis}
\bibfield{author}{\bibinfo{person}{Hans Vandierendonck},
  \bibinfo{person}{George Tzenakis}, {and} \bibinfo{person}{Dimitrios~S.
  Nikolopoulos}.} \bibinfo{year}{2013}\natexlab{}.
\newblock \showarticletitle{Analysis of Dependence Tracking Algorithms for Task
  Dataflow Execution}.
\newblock \bibinfo{journal}{\emph{ACM Trans. Archit. Code Optim.}}
  \bibinfo{volume}{10}, \bibinfo{number}{4}, Article \bibinfo{articleno}{61}
  (\bibinfo{date}{Dec.} \bibinfo{year}{2013}), \bibinfo{numpages}{24}~pages.
\newblock
\showISSN{1544-3566}
\urldef\tempurl%
\url{https://doi.org/10.1145/2541228.2555316}
\showDOI{\tempurl}


\bibitem[\protect\citeauthoryear{Vikranth, Wankar, and Rao}{Vikranth
  et~al\mbox{.}}{2013}]%
        {vikranth2013topology}
\bibfield{author}{\bibinfo{person}{B Vikranth}, \bibinfo{person}{Rajeev
  Wankar}, {and} \bibinfo{person}{C~Raghavendra Rao}.}
  \bibinfo{year}{2013}\natexlab{}.
\newblock \showarticletitle{Topology aware task stealing for on-chip NUMA
  multi-core processors}.
\newblock \bibinfo{journal}{\emph{Procedia Computer Science}}
  \bibinfo{volume}{18} (\bibinfo{year}{2013}), \bibinfo{pages}{379--388}.
\newblock


\bibitem[\protect\citeauthoryear{Wang, Zhang, Su, Wang, Chen, Ji, and Shi}{Wang
  et~al\mbox{.}}{2014}]%
        {wang2014adaptive}
\bibfield{author}{\bibinfo{person}{Yizhuo Wang}, \bibinfo{person}{Yang Zhang},
  \bibinfo{person}{Yan Su}, \bibinfo{person}{Xiaojun Wang}, \bibinfo{person}{Xu
  Chen}, \bibinfo{person}{Weixing Ji}, {and} \bibinfo{person}{Feng Shi}.}
  \bibinfo{year}{2014}\natexlab{}.
\newblock \showarticletitle{An adaptive and hierarchical task scheduling scheme
  for multi-core clusters}.
\newblock \bibinfo{journal}{\emph{Parallel computing}} \bibinfo{volume}{40},
  \bibinfo{number}{10} (\bibinfo{year}{2014}), \bibinfo{pages}{611--627}.
\newblock


\end{thebibliography}

%% Appendix
\appendix
\section{Artifacts Appendix}

\subsection{Getting Started}

The artifacts of this paper are provided as a docker image and can be found at the Zenodo archive: \\https://doi.org/10.5281/zenodo.4290558

To run the image, the only prerequisite is to have docker installed in your local machine.
To download and run the image, the following commands can be used:

\begin{lstlisting}
$ wget \
  https://zenodo.org/record/4290558/files/ppopp.tar
$ docker load --input ppopp.tar
$ docker run -h debian --name artifact \
  -it artifacts/ppopp:1.0.0
\end{lstlisting}

After running the earlier commands, you will be presented with an interactive prompt in an image with all the prerequisites to run the benchmarks installed.

The full suite of benchmarks can take several hours to run completely, and requires a system with a large amount of main memory (+ 32 GB), as some problem sizes are big.
To test the functionality of the artifacts, a small suite of benchmarks is provided, which will generate the granularity scaling plots with smaller problem sizes, and can be run in a few minutes.
The small suite also does only one execution of each benchmark, providing no standard deviation information.

To run the reduced set of benchmarks, the following script is provided which can be executed from \texttt{/home/user}:

\begin{lstlisting}
user@debian:~$ ./run-small-suite.sh
\end{lstlisting}

After running the benchmarks, the results corresponding to the comparison between the optimized Nanos6 runtime compared with
GNU OpenMP and LLVM OpenMP will be stored in the \texttt{/home/user/output/} folder.
To retrieve the results back to the local machine, the following command can be used \textbf{outside the container}:

\begin{lstlisting}
$ docker cp artifact:/home/user/output .
\end{lstlisting}

Which will create a folder named \texttt{output} in the current path and retrieve the plots in pdf format.

\subsection{Step by step}

The artifact is prepared to be flexible and all the sources as well as scripts to re-build all of the software are included. In this section we will explain the structure of the image, how to run the full benchmark suite, and how to change and re-build the experiments and software.

In case it is needed to install more software on the docker image, the password for the user of the machine is \texttt{user}.

\subsubsection{Full benchmark suite}
It is possible to run the full benchmark suite with the same parameters that were used on the original paper. However, be warned that the expected runtime is several hours, and that not all machines may be able to handle the input sizes due to lack of memory. To do so, run the following command:

\begin{lstlisting}
user@debian:~$ ./run-full-suite.sh
\end{lstlisting}

\subsubsection{Directory structure}

The image has the following structure on which the relevant files can be found:
\begin{itemize}
    \item \texttt{Sources}: Contains the source code for the Nanos6 runtime as well as its dependencies (Mercurium, Jemalloc and TAMPI), and GCC 9.3.0 for the benchmarks.
    \item \texttt{Benchmarks}: Contains the source code and binaries for the benchmarks, each one in a separate folder with a standalone (and working) Makefile.
    \item \texttt{Install}: Contains the built binaries for Nanos6 and its dependencies.
    \item \texttt{Automate}: Contains the Pyhton scripts and JSON configuration files that are used to execute the benchmarks.
    \item \texttt{output}: Output directory, where the plots of granularity and efficiency for each of the benchmarks are saved after running the benchmark suite.
\end{itemize}

\subsubsection{Nanos6 Sources}

Although the sources used in the article evaluation are included in the docker image, Nanos6 is free software and the most up-to-date version of the sources is publicly available on the following GitHub repository: https://github.com/bsc-pm/nanos6

We suggest using the last version from git if you want use Nanos6 in your research.

\subsubsection{Rebuilding the software}

In the root folder, an \texttt{install-all.sh} script is provided which will re-extract all the sources and re-build Nanos6, its dependencies, and the GCC 9.3.0 toolchain.
This is provided as a way to make the image re-usable, as it is possible to change the sources and re-build the whole stack.

In case you want to install Nanos6 bare-metal in a machine, we suggest to refer to the official Nanos6 documentation which can be found on the GitHub repository or inside the included source tarball, which will guide you through all the configuration and building process.

\subsubsection{The OmpSs-2 Programming Model}

The BSC Programming Models research group routinely supports other researchers that want to use or improve the OmpSs-2 programming model, the Nanos6 runtime or any of our tools. The specification for the OmpSs-2 programming model can be found online at https://pm.bsc.es/ompss-2, and you can reach us by email at \texttt{pm-tools@bsc.es}.

\subsection{Supported claims}
This artifact is designed to support the claims done in Section 6 of the paper, specifically the performance comparison between the Nanos6 runtime, containing all of the novelties presented on the paper, and other state of the art OpenMP runtimes.

The goal of the artifacts is to facilitate the reuse of our research by others, and allow access to a functional and reusable version of the sources and benchmarks referenced in the paper. Reproducing exactly the results obtained in the paper would require access to the exact same machines and software that was used, which is not in the scope of the artifact.

Running the benchmark suite will generate the performance plots based on task granularity, using the same method that was used to generate the original plots.
Raw results are also extractable, and are available on the \texttt{Automate/scaling} folder (or the \texttt{Automate/scaling\_small/} for the small suite).
However, there are some caveats and claims that are not supported by the artifacts:

\begin{itemize}
    \item The artifacts do not include non-free software that was used during the evaluation. Specifically, the following software is not included and thus not evaluated in the comparison:
    \begin{itemize}
        \item The Intel MKL library. Instead, the BLAS and LAPACK kernels used in the benchmarks have been linked against the OpenBLAS library, which may affect the results. If desired, the user can download the Intel MKL libraries from debian's non-free repositories and link the benchmarks against them.
        \item The Intel Compiler and OpenMP runtimes, which require Intel licenses to use.
        \item The AMD Optimizing C/C++ Compiler and OpenMP runtime, which is available on AMD's website.
        \item The ARM Performance Libraries, which were used on the Graviton2 machine.
    \end{itemize}
    \item Performance evaluation was done bare-metal in all the machines, without container overhead and always on exclusive nodes. Caution is advised when drawing conclusions from the results obtained running the artifacts, as similar conditions need to be achieved for the results to be valid.
\end{itemize}

% \appendix
% \section{Appendix}

% Text of appendix \ldots

\end{document}